\newcommand{\cX}{{\mathcal X}}
\newcommand{\nn}{{\mathbb N}}
\newcommand{\eps}{{\varepsilon}}        
\newcommand{\cP}{\mathcal P}
\newcommand{\cC}{\mathcal C}
\newtheorem{theorem}{Theorem}
\newtheorem*{acknowledgement}{Acknowledgement}
\newtheorem{definition}{Definition}
\newtheorem{lemma}{Lemma}
\newtheorem{remark}{Remark}
\newcommand{\tr}{\mathrm{tr}}
\newcommand*{\rom}[1]{\expandafter\@slowromancap\romannumeral #1@}
\providecommand{\keywords}[1]{{\fontsize{9}{0}\textbf{\textit{Index Terms---}#1}}}
\DeclareMathOperator{\conv}{conv}
\DeclareMathOperator{\linspan}{span}
\DeclareRobustCommand*{\IEEEauthorrefmark}[1]{\raisebox{0pt}[0pt][0pt]{\textsuperscript{\footnotesize #1}}}
\tikzset{
	schraffiert/.style={pattern=horizontal lines,pattern color=#1},
	schraffiert/.default=black
}
\newcommand\copyrighttext{%
	\footnotesize\copyright 2019 IEEE. Personal use of this material is permitted. Permission from IEEE must be obtained for all other uses, in any current or future media, includingreprinting/republishing this material for advertising orpromotional purposes, creating new collective works, for resale or redistribution to servers or lists, orreuse of any copyrighted component of this work in other works.}
\newcommand\copyrightnotice{%
	\begin{tikzpicture}[remember picture,overlay]
	\node[anchor=south,yshift=-20pt] at (current page.south) {\fbox{\parbox{\dimexpr\textwidth-\fboxsep-\fboxrule\relax}{\copyrighttext}}};
	\end{tikzpicture}%
}
\begin{document}
	\title{Reliable Communication over Arbitrarily Varying Channels under Block-Restricted Jamming}
	\author{Christian Arendt\IEEEauthorrefmark{1,2}, Janis N\"otzel\IEEEauthorrefmark{3} and Holger Boche\IEEEauthorrefmark{2}\\
		\scriptsize
		\IEEEauthorblockA{\IEEEauthorrefmark{1}
			BMW Group, 80788 M\"unchen, Germany, Email: christian.ca.arendt@bmw.de}
		
		\IEEEauthorblockA{\IEEEauthorrefmark{2}
			Lehrstuhl f\"ur Theoretische Informationstechnik, Technische Universit\"at M\"unchen, 80290 M\"unchen, Germany}
		\IEEEauthorblockA{\IEEEauthorrefmark{3}
			Theoretische Nachrichtentechnik, Technische Universit\"at Dresden, 01187 Dresden, Germany}\\
	}
	\maketitle
	\copyrightnotice
	\begin{abstract}
	We study reliable communication in uncoordinated vehicular communication from the perspective of Shannon theory. Our system model for the information transmission is that of an Arbitrarily Varying Channel (AVC): One sender-receiver pair wants to communicate reliably, no matter what the input of a second sender is. The second sender is assumed to be uncoordinated and interfering, but is supposed to follow the rational goal of transmitting information otherwise. We prove that repetition coding can increase the capacity of such a system by relating the notion of symmetrizability of an arbitrarily varying channel to invertibility of the corresponding channel matrix. Explicit upper bounds on the number of repetitions needed to prevent system breakdown through diversity are provided. Further we introduce the notion of block-restricted jamming and present a lower and an upper bound on the maximum error capacity of the corresponding restricted AVC.
	\end{abstract}
	{\smallskip}\keywords{Reliable communication, unknown interference, arbitrarily varying channel, block-restricted jamming}.
\begin{section}{Introduction}
	Reliable communication in unlicensed frequency bands is one of the key challenges in wireless connectivity. Especially when looking at reliability issues in state-of-the-art distributed random access schemes which can, for example, be found in Vehicular Ad-hoc NETworks (VANETs), the necessity of increasing the stability of a wireless link is indisputable. Today there exist two major standards for direct vehicular communication in VANETs, IEEE 802.11p (11p) \cite{5514475} and Device-to-Device (D2D) communication in Long Term Evolution (LTE) for vehicle-to-vehicle communication (designated as PC5 \cite{7859501}). In both standards there may occur situations where no central entity controls the use of spectral resources. Thus the frequency band is used in a shared, self-coordinated manner. Power control and adaption techniques cannot fully prevent simultaneous channel access. Thus there may arise situations where an unknown information source causes interference to a point-to-point link. In particular, coexistence of different technologies is a non-negligible issue. Unknown interference can cause packet collisions resulting in packet losses in actual communication schemes, for example in 11p, where collided packets are dropped \cite{7313128}. The necessity of integrating reliability requirements in the physical domain of wireless communication motivates the information-theoretic investigations in this contribution.

	In information theory, communication over a channel with unknown interference is modeled by an Arbitrarily Varying Channel (AVC), introduced by Blackwell et al. in \cite{blackwell1960}. In the AVC model, interference is incorporated by introducing a jammer controlling the channel state in an arbitrary manner. A special feature of the AVC under average error criterion is the effect of symmetrizability, that is, the jammer may choose his state inputs such that any two codewords may be confused by the decoder. For symmetrizable AVCs reliable communication at positive rates cannot be guaranteed. Csiszár and Narayan deduced in \cite{2627} that non-symmetrizability is a sufficient condition for communication over an AVC at positive rates under the average error criterion using deterministic coding.
	
	Motivated by high reliability specifications in the framework of reliable communication in future communication standards, it is important to enable information exchange under maximum error requirements. The capacities of a Discrete Memoryless Channel (DMC) under maximum and average error criterion are identical. In the AVC setting, sender and receiver are lacking information regarding the channel state. Thus the encoder and decoder, as well as the codebook, have to be chosen such that they allow reliable message transmission over a large and potentially varying number of possible channel laws. In this context, the capacities for message transmission under maximum and average error are not necessarily identical \cite{720535}. For the maximum error capacity, a symmetrizability statement is given by Ahlswede in \cite{Ahlswede1978}. The same author showed in \cite{AHLSWEDE1969457} that communication over AVCs at positive rates can be possible using Common Randomness (CR)-assisted coding, even when it is impossible without. CR-assisted coding can be established by two legitimate communication parties observing correlated outcomes of a random experiment hidden from the jammer. However, CR-assisted communication so far requires side links or a common reference signal. In this work, we present diversity as an alternative enabler for reliable communication over AVCs. In contrast to \cite{arendt2017}, where we concentrate on spatial receive diversity, we here focus on transmit diversity in time domain. Additionally, we apply the maximum error leading to a more stringent performance analysis compared to the average error analysis conducted in \cite{arendt2017}.
\linebreak\\
	{\bf Outline of the paper.}  We introduce the notation, coding concepts, as well as the AVC model together with the symmetrizability conditions in Section~\ref{sec:notation_and_channel_models}. In Section~\ref{sec:main_Results} we demonstrate that injectivity of the channel matrix implies non-symmetrizability under both error criteria. This phenomenon is subsequently investigated in the following. Furthermore, we relate the result to the positivity of the maximum error capacity for deterministic coding and provide a computable lower bound on the maximum error capacity if it is positive. In Section~\ref{sec:temporal_prop} we introduce the AVC under block-restricted jamming and present a lower and an upper bound on its maximum error capacity. The proofs of supplementary results (lemmas) are postponed to the appendix.
\end{section}
\begin{section}{Notation, Definitions and Channel Models}\label{sec:notation_and_channel_models}
	We adapt our notation to the one presented in \cite{7447794,arendt2017}: For $L\in\nn$, we define $[L]:=\{1,\ldots,L\}$. We denote the set of permutations on $[L]$ by $S_L$. Let two sets $\mathcal{X},\mathcal{Y}$ of cardinality $|\mathcal X|=L$ and $|\mathcal Y|=L'$ with $L,L'\in\nn$ be given. Their product is given by $\mathcal{X}\times\mathcal{Y}:=\{(x,y):x\in\mathcal{X},\ y\in\mathcal{Y}\}$. Additionally, $\mathcal{X}^n$ is the n-fold product of $\mathcal{X}$ with itself for any $n\in\nn$. The set of probability distributions on a finite set $\mathcal{X}$ is denoted by
	\begin{align}
		\cP(\mathcal{X}):=\left\{p:\mathcal{X}\to\mathbb R\ :\ p(x)\geq 0\ \forall\ x\in\mathcal{X},\ \sum_{x\in\mathcal{X}}p(x)=1\right\}.
	\end{align}
	Further, for all $J\in\mathbb{N}$ we set
	\begin{align}
		\cP^{(J)}(\mathcal{X}):=\left\{\sum_{i=1}^Jp(i)\delta_i^{\otimes J}:p\in\cP(\mathcal{X})\right\}.
	\end{align}
	An important subset of elements of $\cP(\mathcal{X})$ is the set of its extremal points, the Dirac-measures: For $x,x'\in\mathcal{X}$, $\delta_x\in\cP(\mathcal{X})$ is defined through $\delta_x(x')=\delta(x,x')$ where $\delta(\cdot,\cdot)$ is the usual Kronecker-delta symbol. We transfer the probabilistic concepts to linear algebra by considering $\cP(\mathcal{X})$ as being embedded into $\mathbb R^L$ through the bijection $p\mapsto \sum_{x\in\mathcal{X}}p(x)e_x$.	Under this transformation, $\delta_x$ is mapped to $e_x$. This allows a natural use of matrix calculus in our analysis. We solely introduce results from multi-linear algebra for bipartite systems. The generalization to the multi-partite case is straightforward. We use fixed bases $\{e_x\}_{x\in\mathcal X}$, $\{e_y\}_{y\in\mathcal Y}$ for $\mathbb R^L$ and $\mathbb R^{L'}$. $L\times L'$ matrices define linear maps from $\mathbb R^L$ to $\mathbb R^{L'}$ via their actions in these bases. The scalar product $\langle\cdot,\cdot\rangle$ on $\mathbb R^L\times\mathbb R^{L}$ is the standard one: $\langle e_x,e_{x'}\rangle=\delta(x,x')$.
	
	The tensor product of $\mathbb R^L$ with $\mathbb R^{L'}$ is
	\begin{align}
		\mathbb R^L\otimes\mathbb R^{L'}:=\linspan\{e_x\otimes e_{y}\}_{x,\in\mathcal{X},y\in\mathcal{Y}}.
	\end{align}
	This allows us to define general ``product vectors" of two vectors $u=\sum_{x\in\mathcal{X}}u_xe_x$ and $v=\sum_{y\in\mathcal{Y}}v_ye_y$ by
	\begin{align}
		u\otimes v:=\sum_{x\in\mathcal{X},y\in\mathcal{Y}}u_xv_{y}e_x\otimes e_{y}.
	\end{align}
	The vector space $\mathbb R^L\otimes\mathbb R^{L'}$ inherits the scalar product by the formula $\langle u\otimes v,x\otimes y\rangle:=\langle u,x\rangle\langle v,y\rangle$. The space of $L\times L'$ matrices is denoted by $M_{L\times L'}$. Given $A,B\in M_{L\times L'}$, we define $A\otimes B$ through its action on product vectors:
	\begin{align}
		(A\otimes B)(u\otimes v):=(Au)\otimes(Bv).
	\end{align}
	In order to simplify notation later, for $u\in\mathbb R^L$ and $n\in\nn$ we will use the shorthand $u^{\otimes n}:=u\otimes\ldots\otimes u$ for the $n$- fold tensor product of $u$ with itself. Accordingly, for $A\in M_{L\times L'}$, we write $A^{\otimes n}:=A\otimes\ldots\otimes A$. The partial trace $\tr_{[L']}:\mathbb R^L\otimes\mathbb R^{L'}\to\mathbb R^L$ summing over the ``content" of $\mathbb R^{L'}$ is defined in the following way: For $v=\sum_{i,j=1}^{L,L'}v_{i,j}\delta_i\otimes \delta_j$, the partial trace operator is defined as $\tr_{[L']}(v):=\sum_{i,j=1}^{L,L'}v_{i,j}\delta_j$.
	
	The influence of noise during transmission of messages is modeled by stochastic matrices $W$ of conditional probability distributions $\left(w\left(y\vert x\right)\right)_{x\in\mathcal{X},y\in\mathcal{Y}}$, whose entries satisfy $\forall x\in\mathcal{X}$, $y\in\mathcal{Y}:\ w(y\vert x)\in\left[0,1\right]$ and $\forall x\in\mathcal{X}:\ w\left(\cdot\vert x\right)\in\cP(\mathcal{Y})$. Any such matrix is henceforth also called a channel. The set of channels acting on a finite alphabet $\mathcal{X}$ of size $L$ and $\mathcal{Y}$ of size $L'$ is denoted by $\cC(\mathcal{X},\mathcal{Y})$. The special case where $\forall x\in\mathcal{X}$, $y\in\mathcal{Y}:\ w(y\vert x)=\delta(y,x)$ is denoted by $Id$.

	In later analysis, we make use of the Shannon entropy of $p\in\mathcal{P}(\mathcal{X})$ which is defined as
	\begin{align}
		H(p):=-\sum_{x\in\mathcal{X}}^{}p(x)\log(p(x)).
	\end{align}
	Every channel $W:\mathcal{P}(\mathcal{X})\mapsto\mathcal{P}(\mathcal{Y})$ together with a probability distribution $p\in\mathcal{P}(X)$ defines a joint distribution $\mathcal{P}((X,Y)=(x,y))=p(x)w(y\vert x)$ for all $x\in\mathcal{X}$ and $y\in\mathcal{Y}$. The mutual information which, by default, is defined as $I(X;Y):=H(X)-H(X\vert Y)$, can then equivalently be written as $I(p;W):=I(X,Y)$.
	
	In order to understand the cause of system breakdowns due to denial of service attacks or unknown interference, it is important to accurately model these effects in a probabilistic framework. For this reason, we focus on AVCs in particular. The probabilistic law governing the transmission of codewords over a point-to-point AVC for $n$ channel uses is described by
	\begin{align}
		w^{\otimes n}\left(y^n\vert x^n,s^n\right):=\prod\limits_{i=1}^nw(y_i\vert x_i,s_i),
	\end{align}
	where $s^n=(s_1,\ldots,s_n)\in\mathcal{S}^n$ are the jammer's state inputs, $x^n=(x_1,\ldots,x_n)\in\mathcal{X}^n$ are the input codewords of the encoder and $y^n=(y_1,\ldots,y_n)\in\mathcal{Y}^n$ denote the channel outputs at the decoder, all assumed to be taken from finite alphabets. The previously introduced notion naturally extends to products of AVCs. Let, for example, $K=2$ DMCs $W_1\in\mathcal C(\mathcal X_1,\mathcal Y_1)$ and $W_2\in\mathcal C(\mathcal X_2,\mathcal Y_2)$ be given. Then the transition probability matrix of $W_1\otimes W_2$ is defined by $w(y_1,y_2\vert x_1,x_2):=w_1(y_1\vert x_1)\cdot w_2(y_2\vert x_2)$, for all $x_1\in\mathcal{X}_1, x_2\in\mathcal{X}_2, y_1\in\mathcal{Y}_1, y_2\in\mathcal{Y}_2$. This notation can be adapted to AVCs:
\definecolor{light-gray}{gray}{0.8}
\definecolor{dark-gray}{gray}{0.4}
\definecolor{extremelight-gray}{gray}{0.9}
\definecolor{darkblue}{RGB}{0,0,90}
\definecolor{lightblue}{RGB}{102,153,255}
\begin{figure}
	\centering
	\begin{tikzpicture}[thick,scale=0.745, every node/.style={transform shape}]
	\node (T1)  [fill=light-gray,rectangle,inner sep=7pt, text width=1.5cm, align=center] at (0.4,3) {{Encoder}};
	\node (T2)  [fill=light-gray,rectangle,inner sep=7pt, text width=1.5cm, align=center] at (7.5,3) {{Decoder}};
	\node (T3)  [fill=darkblue,rectangle,inner sep=7pt, text width=2.5cm, align=center] at (3.9,3) {\textcolor{white}{$w^{\otimes n}\left(y^n\vert x^n,s^n\right)$}};
	\node (T2)
	[fill=lightblue,rectangle,inner sep=7pt, text width=1.5cm, align=center] at (3.9,1) {{Jammer}};
	\path[-triangle 60, color=dark-gray,line width=1pt, shorten <= -1pt] (3.9,1.4) edge (3.9,2.55);
	\path[-triangle 60, color=dark-gray,line width=1pt, shorten <= -1pt] (1.45,3) edge (2.4,3);
	\path[-triangle 60, color=dark-gray,line width=1pt, shorten <= -1pt] (5.45,3) edge (6.5,3);
	\path[-triangle 60, color=dark-gray,line width=1pt, shorten <= -1pt] (-1.6,3) edge (-0.6,3);
	\path[-triangle 60, color=dark-gray,line width=1pt, shorten <= -1pt] (8.55,3) edge (9.5,3);
	\node (C1) at (3.6,1.9)[text width=2cm,align=center] {\textcolor{black}{{$S^n$}}};
	\node (C2) at (1.8,3.3)[text width=2cm,align=center] {\textcolor{black}{{$X^n$}}};
	\node (C3) at (5.85,3.3)[text width=2cm,align=center] {\textcolor{black}{{$Y_{s^n}^n$}}};
	\node (C4) at (-1.2,3.3)[align=center] {\textcolor{black}{{$M$}}};
	\node (C4) at (8.8,3.3)[text width=2cm,align=center] {\textcolor{black}{{$\hat M$}}};
	\end{tikzpicture}
	\caption{Block diagram of an AVC. The state of the channel is controlled by the state sequence $s^n\in\mathcal S^n$ chosen by the jammer.}
	\label{fig:system_model}
\end{figure}
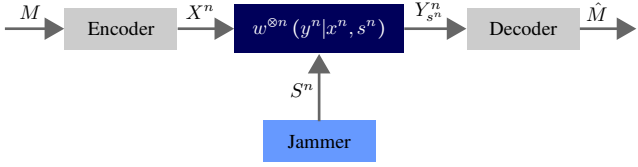
	\begin{definition}[Arbitrarily Varying Channel (AVC)]\label{def:AVC_single}
		Let $\mathcal{X}$, $\mathcal{Y}$, $\mathcal{S}$ be finite sets. Let $W_s\in\mathcal C(\mathcal X,\mathcal Y)$ for every $s\in\mathcal S$. The corresponding arbitrarily varying channel (AVC) is denoted by $\mathcal W:=(W_s)_{s\in\mathcal S}$ or, alternatively, $w(\cdot|x,s):=W_s(\delta_x)$, or $\mathcal W=(w(y|x,s))_{y\in\mathcal Y,x\in\mathcal X,s\in\mathcal S}$. The AVC is denoted by $\mathcal{W}$ and its action is completely described via the sequence $((W_{s^n})_{s^n\in\mathcal S^n})_{n\in\mathbb N}$,	where $W_{s^n}=W_{s_1}\otimes\ldots\otimes W_{s_n}$.
	\end{definition}
A block diagram of the AVC is shown in Figure~\ref{fig:system_model}.
	\begin{definition}[Unassisted Deterministic Code]\label{def:det_code}
		An unassisted (deterministic) code $\mathcal{K}^\mathrm{UA}_n$ for the AVC $\mathcal{W}\in\mathcal{C}(\mathcal{X}\times\mathcal{S},\mathcal{Y})$ consists of a set $\mathcal{M}_n$ of messages and a deterministic encoder $f:\mathcal{M}_n\to\mathcal X^n$ in combination with a collection $\{\mathcal{D}_m\}_{m=1}^{|\mathcal{M}_n|}$ of decoding subsets $\mathcal{D}_m\subset\mathcal{Y}$ for which $\mathcal{D}_m\cap \mathcal{D}_{m'}=\emptyset$ for every $m\neq m'$. The average error of the code $\mathcal{K}^\mathrm{UA}_n$ is given by
		\begin{align}
		\overline{e}_\mathrm{UA}\left(\mathcal{K}^\mathrm{UA}_n\right):=\max_{s^n\in\mathcal{S}^n}\frac{1}{|\mathcal M_n|}\sum\limits_{m=1}^{|\mathcal M_n|}w^{\otimes n}\left(\mathcal{D}^{\complement}_m\vert f(m),s^n\right).
		\end{align}
		Likewise, the maximum error of the code $\mathcal{K}_n$ is given by
		\begin{align}
		{e}^{\mathrm{max}}_{\mathrm{UA}}\left(\mathcal{K}^\mathrm{UA}_n\right):=\max_{s^n\in\mathcal{S}^n}\max_{m\in\mathcal{M}_n}w^{\otimes n}\left(\mathcal{D}^{\complement}_m\vert f(m),s^n\right).
		\end{align}
	\end{definition}
	Advanced encoding and decoding schemes rely on the access to a coordination resource, that is, a random variable $\Gamma$ shared by the transmitter and receiver. $\Gamma$ is used to coordinate the choice of encoders and decoders. For CR-assisted communication, the capacity under maximum error equals the average error capacity (cf.~\cite{csiszar2011information}).
	\begin{definition}[Achievable Rate]\label{def:achievable_rate}
		A non-negative number $R$ is called an \textit{achievable rate} for the AVC $\mathcal{W}\in\mathcal{C}(\mathcal{X}\times\mathcal{S},\mathcal{Y})$ under the average error criterion, if for every $\epsilon>0$ and $\delta>0$ and $n$ sufficiently large, there exists an unassisted code $\mathcal{K}_n$ such that $\frac{\log\vert \mathcal{M}_n\vert}{n}>R-\delta$, and $\overline{e}_\mathrm{UA}<\epsilon$. The achievable rate for $e^\mathrm{max}_\mathrm{UA}$ is defined accordingly.
	\end{definition}
	\begin{definition}\label{def:capacity}
		Let $K\in\mathbb{N}$. Given an AVC $\mathcal{W}\in\mathcal{C}(\mathcal{X}\times\mathcal{S},\mathcal{Y})$, the deterministic capacities of the AVC are defined as
		\begin{align}
		\overline{C}_\mathrm d(\mathcal W):=\sup\left\{R:\begin{array}{l}R\mathrm{\ is\ an\ achievable\ rate\ for\ a}\\
		\mathrm{deterministic\ coding\ scheme}\\
		\mathrm{under\ average\ error}
		\end{array}\right\}.
		\end{align}
	 $C_\mathrm d^\mathrm{max}$ is defined accordingly.
	\end{definition}
	\begin{definition}[Maximum Error Symmetrizability \cite{Ahlswede1978}]\label{def:x-symm}
		Let $\mathcal{W}\in\mathcal{C}(\mathcal{X}\times\mathcal{S},\mathcal{Y})$ be an AVC. $\mathcal{W}$ is maximum error symmetrizable $($${e}^{\mathrm{max}}$-symm.$)$, if for all $x,x'\in\mathcal{X}$ it holds
		\begin{align}
		\conv(\{W\left(\delta_s\otimes\delta_x\right)\}_{s\in\mathcal S})\cap\conv(\{W\left(\delta_s\otimes\delta_{x'}\right)\}_{s\in\mathcal S})\neq\emptyset,
		\end{align}
		where $\conv$ denotes the convex hull, which is defined as
		\begin{align}\label{eqn:symm_max_error}
		\conv(W):=\left\{W:\lambda_s\geq 0\forall s\in\mathcal{S}\land\sum_{s\in\mathcal{S}}\lambda_s=1\right\},
		\end{align}
		where $W=\sum_{s\in\mathcal{S}}\lambda_sW\left(\delta_s\otimes\delta_x\right)$.
	\end{definition}
	\begin{definition}[Average Error Symmetrizability \cite{720535}]\label{def:sim_symm}
		An AVC $\mathcal{W}\in\mathcal{C}(\mathcal{X}\times\mathcal{S},\mathcal{Y})$ is called average error symmetrizable $(\overline{e}$-symm.$)$, if for some $U\in\mathcal{C}(\mathcal{X},\mathcal{S})$,
		\begin{align}\label{eqn:symm_avg_error}
		\sum\limits_{s\in\mathcal{S}}w(y\vert x,s)u(s\vert x')=	\sum\limits_{s\in\mathcal{S}}w(y\vert x',s)u(s\vert x),
		\end{align}
		for every $x,x'\in\mathcal{X}$, $y\in\mathcal{Y}$.
	\end{definition}
	The difference between ${e}^{\mathrm{max}}$-symm. and $\overline{e}$-symm. is visualized in Figure~\ref{fig:symmetrizability_max_avg} for $|\mathcal{X}|=4$.
	\setlength{\belowcaptionskip}{-1pt}
	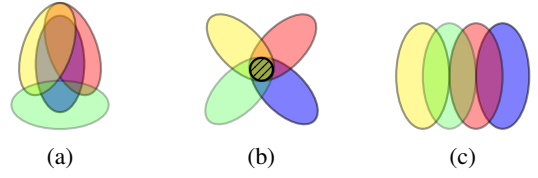
\begin{figure}
		\centering
		\begin{subfigure}[b]{0.16\textwidth}
			\centering
			\begin{tikzpicture}[thick,scale=0.32, every node/.style={transform shape}]
			\draw[fill=blue,opacity=.5] (0,0) ellipse (1cm and 2cm);
			\draw[fill=red,opacity=.4,rotate=20] (0.7,0.4) ellipse (1cm and 2cm);
			\draw[fill=yellow,opacity=.4,rotate=-20] (-0.7,0.4) ellipse (1cm and 2cm);
			\draw[fill=green,opacity=.25,rotate=90] (-1.7,0) ellipse (1cm and 2cm);
			\end{tikzpicture}
			\caption{}
			\label{fig:max}
		\end{subfigure}
		\begin{subfigure}[b]{0.13\textwidth}
			\centering
			\begin{tikzpicture}[thick,scale=0.28, every node/.style={transform shape}]
			\draw[fill=blue,opacity=.5,rotate=45] (-1,-2.5) ellipse (1cm and 2cm);
			\draw[fill=red,opacity=.4,rotate=-45] (1,0.4) ellipse (1cm and 2cm);
			\draw[fill=yellow,opacity=.4,rotate=45] (-1,0.4) ellipse (1cm and 2cm);
			\draw[fill=green,opacity=.25,rotate=-45] (1,-2.5) ellipse (1cm and 2cm);
			\draw[fill=black, pattern=north east lines] (0,-1.45) circle (0.54cm);
			\end{tikzpicture}
			\caption{}
			\label{fig:blume}
		\end{subfigure}
		\begin{subfigure}[b]{0.16\textwidth}
			\centering
			\begin{tikzpicture}[thick,scale=0.35, every node/.style={transform shape}]
			\draw[fill=blue,opacity=.5] (1.5,0) ellipse (1cm and 2cm);
			\draw[fill=red,opacity=.4] (0.5,0) ellipse (1cm and 2cm);
			\draw[fill=yellow,opacity=.4] (-1.5,0) ellipse (1cm and 2cm);
			\draw[fill=green,opacity=.25] (-0.5,0) ellipse (1cm and 2cm);
			\end{tikzpicture}
			\caption{}
			\label{fig:non_symm}
		\end{subfigure}
		\caption{Criteria for ${e}^{\mathrm{max}}$-symm. (\subref{fig:max}) and $\overline{e}$-symm. (\subref{fig:blume}). Depicted are the convex hulls induced by \eqref{eqn:symm_max_error} and \eqref{eqn:symm_avg_error}. Each region (red, green, blue, yellow) represents the set of states that can be created by the jammer given a particular input. Subfigure~(\subref{fig:max}) and (\subref{fig:blume}) show $\overline{e}$-symmetrizable AVCs, Subfigure~(\subref{fig:non_symm}) corresponds to a non-symmetrizable AVC.}
		\label{fig:symmetrizability_max_avg}
	\end{figure}
\end{section}
\begin{section}{Invertibility and Maximum Error Capacity}\label{sec:main_Results}
In this section we provide concepts that allow for reliable communication over symmetrizable AVCs. For these investigations, we need Ahlswede's ``Separation Lemma".
	\begin{lemma}[Separation Lemma \cite{Ahlswede1978}]\label{lem:separation_lemma}
	For the AVC $\mathcal{W}$ and the deterministic capacity under the maximum error criterion $C_\mathrm{d}^\mathrm{max}$ the following two statements are equivalent:
		\begin{enumerate}
		\item $C_\mathrm{d}^\mathrm{max}(\mathcal{W})>0$,
		\item $\mathcal{W}$ is non ${e}^{\mathrm{max}}$-symm. according to Definition~\ref{def:x-symm}.
		\end{enumerate}
	\end{lemma}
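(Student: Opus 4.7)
The plan is to prove the two implications of the equivalence separately. The direction (1) $\Rightarrow$ (2) is best handled by contrapositive and requires only an elementary jamming construction. The reverse direction (2) $\Rightarrow$ (1) is the hard one, since it demands the explicit construction of a positive-rate code out of a purely structural non-intersection condition; this is where the main obstacle will sit.

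For the contrapositive of (1) $\Rightarrow$ (2), I would assume $\mathcal{W}$ is $e^{\max}$-symmetrizable and show that any unassisted code $\mathcal{K}_n^{\mathrm{UA}}$ with $|\mathcal{M}_n| \geq 2$ must satisfy $e^{\max}_{\mathrm{UA}}(\mathcal{K}_n^{\mathrm{UA}}) \geq 1/2$. Fix two messages $m \neq m'$ with codewords $x^n := f(m)$ and $x'^n := f(m')$. By Definition~\ref{def:x-symm}, for every coordinate $i$ there exist $\lambda^{(i)}, \mu^{(i)} \in \mathcal{P}(\mathcal{S})$ with $\sum_{s} \lambda^{(i)}_s w(\cdot|x_i,s) = \sum_{s} \mu^{(i)}_s w(\cdot|x'_i,s)$. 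Setting $\hat{\lambda} := \prod_{i} \lambda^{(i)}$ and $\hat{\mu} := \prod_{i} \mu^{(i)}$, the product form of $W_{s^n}$ forces a single common output law $Q$ on $\mathcal{Y}^n$:
\begin{align}
Q \;:=\; \sum_{s^n} \hat{\lambda}(s^n)\, w^{\otimes n}(\cdot|x^n,s^n) \;=\; \sum_{s^n} \hat{\mu}(s^n)\, w^{\otimes n}(\cdot|x'^n,s^n).
\end{align}
Disjointness of $\mathcal{D}_m$ and $\mathcal{D}_{m'}$ yields $Q(\mathcal{D}_m^c) + Q(\mathcal{D}_{m'}^c) \geq 1$, so at least one of the two summands is $\geq 1/2$; a standard averaging argument then extracts a deterministic $s_0^n \in \mathrm{supp}(\hat{\lambda}) \cup \mathrm{supp}(\hat{\mu})$ realizing an error of at least $1/2$ for either $m$ or $m'$. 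Since this lower bound on $e^{\max}_{\mathrm{UA}}$ is independent of $n$ and of $|\mathcal{M}_n|$, no rate $R>0$ can be achievable, so $C_{\mathrm d}^{\max}(\mathcal{W}) = 0$.

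For (2) $\Rightarrow$ (1), I would first exploit non-$e^{\max}$-symmetrizability to isolate a pair $x_0, x_1 \in \mathcal{X}$ with $\conv\{W(\cdot|x_0,s)\}_{s} \cap \conv\{W(\cdot|x_1,s)\}_{s} = \emptyset$. Compactness of the two hulls together with the separating hyperplane theorem then supplies $v \in \mathbb{R}^{|\mathcal{Y}|}$ and a margin $\gamma > 0$ with $\langle v,\, W(\cdot|x_0,s) - W(\cdot|x_1,s') \rangle \geq \gamma$ for all $s, s' \in \mathcal{S}$. This distills a binary sub-AVC on $\{x_0, x_1\}$ carrying a single-letter linear statistic that the jammer cannot neutralize. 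On that sub-AVC I would then run a random coding argument: draw codewords i.i.d.\ on $\{x_0, x_1\}^n$ from a suitable distribution and decode via a threshold rule (or a universal maximum-mutual-information rule) built from $v$. The hard step, and what distinguishes the AVC analysis from a plain DMC argument, is controlling the \emph{maximum} error uniformly over every state sequence $s^n \in \mathcal{S}^n$ simultaneously; the standard remedy is to reduce to the polynomially many empirical types of $s^n$ via the method of types, and then combine a Hoeffding-type concentration of the linear statistic with a union bound over codewords and types. A positive-probability argument then produces an unassisted deterministic code of some rate $R > 0$ with exponentially vanishing maximum error, establishing $C_{\mathrm d}^{\max}(\mathcal{W}) > 0$.
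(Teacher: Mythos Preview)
The paper does not supply its own proof of this lemma: it is quoted with a citation to Ahlswede~\cite{Ahlswede1978} and then used as a black box (in the proofs of Theorem~\ref{thm:when_positive_than_lower_bound} and implicitly Theorem~\ref{thm:inversification_AVC}). So there is nothing in the paper to compare your proposal against.

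On the merits of your sketch itself: the contrapositive direction is correct and is exactly the standard argument. For the constructive direction, your plan---isolate a pair $x_0,x_1$ with disjoint hulls, separate by a hyperplane, and code over the resulting binary sub-AVC---is precisely Ahlswede's route, and in fact the paper later reuses the same two ingredients (separating hyperplane plus reduction to a binary-output AVC) in Lemma~\ref{lem:bring_it_together} and Theorem~\ref{thm:when_positive_than_lower_bound}. One caution on your last paragraph: the ``positive-probability'' step for the \emph{maximum} error criterion is more delicate than you indicate. A single expectation bound on the random codebook does not immediately control $\max_{m}\max_{s^n}$, since the maximum does not commute with the expectation; you need either an expurgation argument that removes a vanishing fraction of bad codewords \emph{uniformly} in $s^n$, or the cleaner route actually taken in \cite{Ahlswede1978} and echoed in this paper, namely to push the separating functional through to a binary \emph{output} quantization and then invoke the explicit binary-output AVC capacity (here Lemma~\ref{lem:binary_AVC_cappa}). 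Your random-coding-over-$\{x_0,x_1\}^n$ path can be completed, but as written it stops just short of the point where the max-error case genuinely diverges from the average-error case.
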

Moreover, we make use of the left-invertibility of a matrix which is defined in the following.
	\begin{definition}[Left-Invertibility]\label{def:left-invertibility}
		Let $A$ be an $m\times n$ matrix. We say that $A$ is left invertible if there exists a matrix $B$ of size $n\times m$ such that $BA=Id$.
	\end{definition}
Theorem~\ref{thm:inv_implies_symm} establishes a direct link between the symmetrizability of an AVC under the maximum error criterion and the invertibility of the corresponding channel matrix.
	\begin{theorem}\label{thm:inv_implies_symm}
	Let $\mathcal{S}$, $\mathcal{X}$ and $\mathcal{Y}$ be finite sets. Let $\mathcal{W}\in\mathcal{C}(\mathcal{X}\times\mathcal{S},\mathcal{Y})$ be left-invertible according to Definition~\ref{def:left-invertibility}. Then there exists $x,x'\in\mathcal{X}$, such that $\conv(\{W\left(\delta_x\otimes\delta_s\right)\}_{s\in\mathcal S})\cap\conv(\{W\left(\delta_{x'}\otimes\delta_{s}\right)\}_{s\in\mathcal S})=\emptyset$.	Thus $\mathcal{W}$ is non ${e}^{\mathrm{max}}$-symm. according to Definition~\ref{def:x-symm}.
	\end{theorem}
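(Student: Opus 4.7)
The plan is to argue by contradiction: assume that $\mathcal W$ is simultaneously left-invertible and $e^{\mathrm{max}}$-symmetrizable, and extract a non-trivial element of $\ker W$. Concretely, suppose the conclusion of the theorem fails, i.e.\ for every $x,x'\in\mathcal X$ the two convex hulls intersect; in particular, fixing any $x\neq x'$ (the case $|\mathcal X|=1$ being trivial), there exist $\lambda,\mu\in\cP(\mathcal S)$ with
\begin{align*}
\sum_{s\in\mathcal S}\lambda(s)\,W(\delta_x\otimes\delta_s)=\sum_{s\in\mathcal S}\mu(s)\,W(\delta_{x'}\otimes\delta_s).
\end{align*}

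Next, I would view $W$ as a $|\mathcal Y|\times(|\mathcal X|\cdot|\mathcal S|)$ matrix whose columns are the stochastic vectors $W(\delta_a\otimes\delta_b)$, indexed by the pairs $(a,b)\in\mathcal X\times\mathcal S$. Define a coefficient vector $v\in\mathbb R^{|\mathcal X|\cdot|\mathcal S|}$ by setting $v_{(x,s)}:=\lambda(s)$, $v_{(x',s)}:=-\mu(s)$ for every $s\in\mathcal S$, and $v_{(a,b)}:=0$ whenever $a\in\mathcal X\setminus\{x,x'\}$. With this choice, the displayed equality is exactly $Wv=0$. Since $\lambda\in\cP(\mathcal S)$, its entries sum to one, so the entries of $v$ along the block indexed by $x$ also sum to $1$; in particular $v\neq 0$.

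Left-invertibility now closes the argument: picking $B$ with $BW=Id$ yields $v=BWv=B\cdot 0=0$, contradicting $v\neq 0$. Hence there must exist $x,x'\in\mathcal X$ with $\conv(\{W(\delta_x\otimes\delta_s)\}_{s\in\mathcal S})\cap\conv(\{W(\delta_{x'}\otimes\delta_s)\}_{s\in\mathcal S})=\emptyset$, so $\mathcal W$ is non $e^{\mathrm{max}}$-symmetrizable by Definition~\ref{def:x-symm}. I do not foresee a serious obstacle; the only subtle point is recognizing that the normalization $\sum_s\lambda(s)=1$ is precisely what guarantees $v\neq 0$, thereby converting the convex-hull intersection into a genuine linear dependence among the columns of $W$ that left-invertibility forbids.
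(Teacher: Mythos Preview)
Your proof is correct and follows essentially the same contradiction strategy as the paper: assume the convex hulls intersect for some $x\neq x'$, rewrite the resulting equality as $W$ applied to two distinct probability vectors supported on $\{x\}\times\mathcal S$ and $\{x'\}\times\mathcal S$, and invoke left-invertibility (injectivity) to force these two vectors to coincide, which is impossible. The only cosmetic difference is that the paper phrases the final step in tensor language---applying the left inverse to recover $\sum_s\lambda_s\,\delta_x\otimes\delta_s=\sum_s\mu_s\,\delta_{x'}\otimes\delta_s$ and then taking a partial trace over $\mathcal S$ to obtain $\delta_x=\delta_{x'}$---whereas you form the difference vector $v$ explicitly and argue $v\neq 0$ from $\sum_s\lambda(s)=1$; both extract the same contradiction from the same linear dependence.
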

	\begin{proof}[Proof of Theorem~\ref{thm:inv_implies_symm}]
	Assume for contradiction that for all $x,x'\in\mathcal{X}$ the following holds:
		\begin{align}\label{eqn:first}
		\left\{\sum_{s\in\mathcal{S}}\lambda_sW\left(\delta_x\otimes\delta_s\right)\right\}\cap\left\{\sum_{s\in\mathcal{S}}\mu_sW\left(\delta_{x'}\otimes\delta_{s}\right)\right\}\neq\emptyset,
		\end{align}
	with $\lambda_s,\mu_s\geq 0$ with $\sum_{s\in\mathcal{S}}\lambda_s=1$ and $\sum_{s\in\mathcal{S}}\mu_s=1$. Equation~\eqref{eqn:first} implies that the following equation has to have a solution for every $x,x'\in\mathcal{X}$:
		\begin{align}\label{eqn:second}
		\sum_{s\in\mathcal{S}}\lambda_sW\left(\delta_x\otimes\delta_s\right)&=\sum_{s\in\mathcal{S}}\mu_sW\left(\delta_{x'}\otimes\delta_{s}\right).
		\end{align}
	Since $\mathcal{W}$ is left-invertible by assumption, \eqref{eqn:second} can be reformulated in the following way:
		\begin{align}\label{eqn:three}
		\sum_{s\in\mathcal{S}}\lambda_s\delta_x\otimes\delta_{s}&=\sum_{s\in\mathcal{S}}\mu_s\delta_{x'}\otimes\delta_{s}.
		\end{align}
	Taking the partial trace over the first subsystem of both sides of \eqref{eqn:three} leads to $\delta_{x}=\delta_{x'}$. Since $x\neq x'$ by assumption, the theorem is proven.
	\end{proof}
	\begin{remark}
	The converse is not necessarily true.
	\end{remark}
In this work we develop applications of the invertibility criterion allowing for a comprehensive understanding of the impact of receive diversity on communication over AVCs. Next, we prove a lower bound on the maximum error capacity of non-symmetrizable AVCs. This is a first step towards a more widespread use of maximum error criteria in the evaluation and development of future applications. Ahlswede and Wolfowitz determined the capacity for a non-symmetrizable AVC with binary output alphabet in \cite{ahlswede1970capacity}.
	\begin{lemma}[Maximum Error Capacity of the Binary Output AVC \cite{ahlswede1970capacity}]\label{lem:binary_AVC_cappa}
	Let $W\in\mathcal C(\mathcal X,\mathcal Y)$ be a non-symmetrizable AVC with input alphabet $\mathcal{X}$, binary output alphabet $\mathcal{Y}=\{0,1\}$ and state set $\mathcal{S}$. It holds $	C_\mathrm{d}^\mathrm{max}(\mathcal{W})=R_\mathrm{d}^\mathrm{max}(\mathcal{W})$, where
		\begin{align}\label{eqn:binary_output_AVC_cappa}
		R_\mathrm{d}^\mathrm{max}(\mathcal{W}):=\min_{T\in\mathcal{C}(\mathcal{X},\mathcal{S})}\max_{p\in\mathcal{P}^{(2)}(\mathcal{X})}I\left(p;W(Id\otimes T)\right).
		\end{align}
	\end{lemma}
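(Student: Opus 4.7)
The plan is to establish matching achievability and converse bounds, both equal to $R_\mathrm{d}^\mathrm{max}(\mathcal{W})$, following the strategy of Ahlswede-Wolfowitz.

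\textbf{Achievability.} Let $(T^*, p^*)$ attain the outer minimum and inner maximum of $R_\mathrm{d}^\mathrm{max}(\mathcal{W})$, with $p^* \in \mathcal{P}^{(2)}(\mathcal{X})$ supported on some pair $\{x_0, x_1\}$. Against the memoryless averaged channel $\tilde W := W(Id \otimes T^*)$, a random code with codewords drawn i.i.d.\ from $p^*$ achieves vanishing average error at every rate below $I(p^*; \tilde W)$ by the classical DMC random coding theorem. To extend this guarantee to arbitrary $s^n \in \mathcal{S}^n$, I would exploit non-symmetrizability (Lemma~\ref{lem:separation_lemma}): since $|\mathcal{Y}|=2$, the output statistics under any state sequence $s^n$ are fully captured by the empirical $1$-frequency per input letter, and the minimizing $T^*$ dominates the worst-case deviation in the sense that no concentrated $s^n$ can beat the averaged channel by more than a vanishing margin. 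A joint-typicality decoder with threshold determined by $\tilde W$ then performs uniformly well over $s^n$. Finally, expurgation removes the worst half of the codewords, converting vanishing average error into vanishing maximum error with no rate cost.

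\textbf{Converse.} For each fixed $T \in \mathcal{C}(\mathcal{X}, \mathcal{S})$, I would let the jammer adopt the memoryless randomized strategy $s_i \sim T(\cdot \,|\, x_i)$. Because the maximum error over $s^n$ dominates any average over randomized jammer strategies, every reliable code must also perform well against this randomized jammer, which reduces the AVC to the DMC $\tilde W_T := W(Id \otimes T)$. Fano's inequality then yields $C_\mathrm{d}^\mathrm{max}(\mathcal{W}) \leq \max_{p \in \mathcal{P}(\mathcal{X})} I(p; \tilde W_T)$. Since $|\mathcal{Y}|=2$, the Carath\'eodory/Fenchel-Eggleston support-size bound for DMC capacity, applied to the concave map $p \mapsto I(p; \tilde W_T)$, shows that this maximum is attained on $\mathcal{P}^{(2)}(\mathcal{X})$. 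Taking the infimum over $T$ gives $C_\mathrm{d}^\mathrm{max}(\mathcal{W}) \leq R_\mathrm{d}^\mathrm{max}(\mathcal{W})$; no minimax swap is needed since the bound holds for every $T$ individually.

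\textbf{Main obstacle.} The delicate step is achievability under the maximum error criterion: random coding only provides average-error control, and the worst-case $s^n$ (not just a randomized jammer) must be handled uniformly. Expurgation is the standard remedy, but it must be compatible with the simultaneous threshold-decoding guarantee over all $s^n$. The binary-output hypothesis is the essential lever here: it collapses the decoder's discrimination task to a one-dimensional comparison of empirical $1$-frequencies and matches exactly the two-letter reduction coming from Fenchel-Eggleston on the converse side. This tight matching is what yields the clean minimax formula; in higher-dimensional output alphabets the corresponding two-sided characterization is known to fail in general, which is why the binary-output restriction cannot be dropped without additional structural assumptions.
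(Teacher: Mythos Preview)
The paper does not prove Lemma~\ref{lem:binary_AVC_cappa}; it is quoted as an external result from Ahlswede--Wolfowitz and used as a black box. So there is no in-paper argument to compare your proposal against.

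Evaluating your sketch on its own: the converse is sound. Letting the jammer play $s_i\sim T(\cdot\,|\,x_{m,i})$ is legitimate under the maximum-error criterion (since $\max_{s^n}\max_m=\max_m\max_{s^n}$, the jammer may be assumed to know the codeword), and the Fenchel--Eggleston reduction to two-point input distributions for a binary-output DMC is correct.

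The achievability, however, has a real gap. You build a random code for the \emph{single} averaged DMC $\tilde W=W(Id\otimes T^\star)$ and then assert that ``no concentrated $s^n$ can beat the averaged channel by more than a vanishing margin.'' That claim is exactly the hard part of the Ahlswede--Wolfowitz argument, and it is not implied by non-symmetrizability alone, nor by random coding against one fixed DMC. What is actually needed is the structural consequence of $|\mathcal Y|=2$: for each input letter $x$ the set $\conv\{W(\delta_x\otimes\delta_s)\}_{s\in\mathcal S}$ is an interval in $[0,1]$, and non-symmetrizability gives two letters $x_0,x_1$ whose intervals are disjoint. A threshold decoder on the empirical $1$-frequency then separates $x_0$ from $x_1$ \emph{uniformly over all} $s^n$, not merely in expectation under $T^\star$; this is what drives the uniform max-error bound and makes expurgation unnecessary for the error analysis (expurgation alone does not convert average-over-$s^n$ control into worst-case-over-$s^n$ control). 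Your sketch gestures at the one-dimensional reduction in the ``main obstacle'' paragraph but does not use it in the achievability step, which is where it is needed.
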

In order to be able to provide a lower bound on the maximum error capacity, which we present later in this section, we need to ensure the existence of a quantization which maps the received output symbol to a smaller alphabet and preserves the property of non-symmetrizability of an AVC.
	\begin{lemma}[Non-Symmetrizability Preserving Quantization]\label{lem:bring_it_together}
	Let $\Delta^{(d)}$ denote the probability simplex of dimension $d\in\mathbb{N}$ defined as $\Delta^{(d)}:=\left\{x\in\mathbb{R}^{d+1}\middle|x\geq 0,\mathbf{1}^Tx=1\right\}$ with $d\in\mathbb{N}_{\geq 2}$ and $\mathbf{1}$ denotes the all ones vector $(1,\ldots,1)\in\mathbb{R}^d$. Let $\mathcal{A},\mathcal{B}\subset\Delta^{(d)}$ be two compact convex sets on the probability simplex for which $\mathcal{A}\cap\mathcal{B}=\emptyset$. Then there exist linear transformations $Q_i:\mathbb{R}^{d+1-i}\to\mathbb{R}^{d-i},Q_i(\Delta^{(d+1-i)})\subset\Delta^{(d-i)}$ for $i\in[1,d-2]$ such that for the sets after iteratively applying $Q_i$,
	\begin{align}
		\hat{\mathcal{A}}_{d-2}:=\{(Q_{d-2}\circ\ldots\circ Q_1)(a):a\in\mathcal{A}\},
	\end{align}
	and
	\begin{align}
		\hat{\mathcal{B}}_{d-2}:=\{(Q_{d-2}\circ\ldots\circ Q_1)(b):b\in\mathcal{B}\},
	\end{align}
	it still holds $\hat{\mathcal{A}}_{d-2}\cap\hat{\mathcal{B}}_{d-2}=\emptyset$.
	\end{lemma}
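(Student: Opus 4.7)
The plan is to build each $Q_i$ as a two-outcome ``binary test'' stochastic map, derived directly from a strict separating hyperplane between $\mathcal{A}$ and $\mathcal{B}$. The first such quantization will already collapse the sets onto an edge of the image simplex, after which the remaining reductions are essentially bookkeeping.

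First I would invoke the strict separation theorem for disjoint compact convex sets in $\mathbb{R}^{d+1}$: there exist $v \in \mathbb{R}^{d+1}$ and reals $\alpha < \beta$ with $\langle v, a \rangle \leq \alpha$ for every $a \in \mathcal{A}$ and $\langle v, b \rangle \geq \beta$ for every $b \in \mathcal{B}$. Because every $x \in \Delta^{(d)}$ satisfies $\mathbf{1}^T x = 1$, the separation is invariant under $v \mapsto \lambda v + c\mathbf{1}$ with $\lambda > 0$ and $c \in \mathbb{R}$, and $v$ cannot be a multiple of $\mathbf{1}$ (else $\langle v, x \rangle$ would be constant on $\Delta^{(d)}$, precluding separation). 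Hence $\lambda$ and $c$ can be chosen so that $v_k \in [0,1]$ for every $k \in [d+1]$, with the strict separation $\alpha<\beta$ retained after a compatible rescaling of these bounds.

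Next I would define $Q_1 \colon \mathbb{R}^{d+1} \to \mathbb{R}^d$ by $Q_1(e_k) := v_k e'_1 + (1-v_k) e'_2$ for $k \in [d+1]$ and extend linearly, where $\{e'_1, \ldots, e'_d\}$ is the standard basis of $\mathbb{R}^d$. Each column lies in $\Delta^{(d-1)}$, so $Q_1$ is a stochastic matrix with $Q_1(\Delta^{(d)}) \subset \Delta^{(d-1)}$. Linearity yields $Q_1(x) = \langle v, x \rangle e'_1 + (1 - \langle v, x \rangle) e'_2$, placing $Q_1(\mathcal{A})$ on the segment with first coordinate in $[0, \alpha]$ and $Q_1(\mathcal{B})$ on the segment with first coordinate in $[\beta, 1]$; these are disjoint because $\alpha < \beta$. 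The remaining maps come by iterating this construction: $Q_1(\mathcal{A})$ and $Q_1(\mathcal{B})$ are disjoint compact convex sets in $\Delta^{(d-1)}$, separated already by the functional $e'_1 \in \mathbb{R}^d$ whose coordinates lie in $[0,1]$, so the very same construction produces $Q_2$; repeating a total of $d-2$ times yields $Q_1, \ldots, Q_{d-2}$ ending in $\Delta^{(2)}$ with disjoint (in fact collinear) images.

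The main obstacle is conceptual rather than technical: one must recognize that disjointness is preserved by a stochastic map $M$ exactly when $\ker(M) \cap (\mathcal{A} - \mathcal{B}) = \emptyset$, and that a strict separator provides a concrete one-dimensional direction to quotient out via the associated binary test. The remaining steps---the normalization to $v_k \in [0,1]$, the verification that $Q_1$ is stochastic, and the induction on $d$---are all routine once this observation is in place.
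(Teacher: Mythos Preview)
Your argument is correct and takes a genuinely different, more elementary route than the paper. The paper constructs each $Q_i$ as a three-stage composite: first a shrinking $N_{\lambda(d)}=\lambda\,Id+(1-\lambda)T$ toward the barycenter (Lemma~\ref{lem:shrinking and rotation}), then a rotation $O$ fixing $\mathbf{1}$ that aligns the separating hyperplane so that it contains the vertex $\delta_{d-i}$ together with a point of the opposite face, and finally a coordinate projection $V$ (Lemma~\ref{lem:projection}) that collapses $\delta_{d-i}$ onto that face; the shrinking step is there only to guarantee that the rotated image stays inside the simplex, so that the composite is stochastic. You sidestep this machinery entirely by normalizing the separating functional to $v_k\in[0,1]$ and reading off the two-column stochastic matrix $e_k\mapsto v_k e'_1+(1-v_k)e'_2$; this already lands both images on a single edge of $\Delta^{(d-1)}$, and the remaining $Q_2,\ldots,Q_{d-2}$ become trivial coordinate-dropping maps that act as the identity on that edge. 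Your construction is shorter, needs no auxiliary lemmas, and makes the role of the separator completely transparent. The paper's construction, by contrast, reduces the ambient dimension by exactly one while keeping the images genuinely full-dimensional at each stage, which is not required by the lemma as stated but could matter if one wanted the intermediate quantizers to retain more than a single bit of information.
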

	The proof of Lemma~\ref{lem:bring_it_together} can be found in the Appendix. The following theorem presents a result for positivity conservation under binary quantization in non-symmetrizable AVCs and thus, via \eqref{eqn:binary_output_AVC_cappa}, delivers a lower bound on the maximum error capacity in situations where $C_\mathrm{d}^\mathrm{max}(\mathcal{W})$ is positive.
	\begin{theorem}(Lower Bound on the Maximum Error Capacity)\label{thm:when_positive_than_lower_bound}
	Let $\mathcal{X}$,$\mathcal{Y}$ and $\mathcal{S}$ be finite alphabets. Let $\mathcal{W}\in\mathcal C(\mathcal{X}\times\mathcal S,\mathcal Y)$ be an AVC. If $\mathcal{W}$ is non-symmetrizable, a lower bound on the deterministic maximum error capacity is given by
		\begin{align}\label{eq:lower_bound_general}
		C_\mathrm{d}^\mathrm{max}(\mathcal{W})\geq\max_{Q\in\mathcal{C}(\mathcal{Y},\{0,1\})}R_\mathrm{d}^\mathrm{max}(Q\circ\mathcal{W})>0.
		\end{align}
	\end{theorem}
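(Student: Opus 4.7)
The strategy is to combine the preceding three results via a data-processing lifting. For any stochastic matrix $Q\in\mathcal{C}(\mathcal{Y},\{0,1\})$, if $(f,\{\mathcal{D}_m^{\mathrm{bin}}\})$ is an unassisted deterministic code for the binary-output AVC $Q\circ\mathcal{W}$ with vanishing maximum error, then setting $\mathcal{D}_m:=(Q^{\otimes n})^{-1}(\mathcal{D}_m^{\mathrm{bin}})$ gives an unassisted deterministic code for $\mathcal{W}$ with the same rate and the same maximum error, because the output distribution seen by the decoder of the lifted code equals $Q^{\otimes n}(w^{\otimes n}(\cdot|x^n,s^n))$ pointwise in $s^n$. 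The quantization produced below turns out to be deterministic (a symbol-collapsing map), so the preimage sets are indeed disjoint and no randomized decoder is introduced. Consequently $C_\mathrm{d}^\mathrm{max}(\mathcal{W})\geq C_\mathrm{d}^\mathrm{max}(Q\circ\mathcal{W})$ for every admissible $Q$.

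Next I would construct a specific $Q$ for which the quantized AVC $Q\circ\mathcal{W}$ is still non-${e}^{\mathrm{max}}$-symmetrizable. Non-symmetrizability of $\mathcal{W}$ gives, via Definition~\ref{def:x-symm}, two input letters $x\neq x'$ such that the compact convex sets $\mathcal{A}:=\conv(\{W(\delta_x\otimes\delta_s)\}_{s\in\mathcal{S}})$ and $\mathcal{B}:=\conv(\{W(\delta_{x'}\otimes\delta_s)\}_{s\in\mathcal{S}})$ in the output probability simplex are disjoint. Applying Lemma~\ref{lem:bring_it_together} iteratively reduces the output alphabet to size two while keeping the images of $\mathcal{A}$ and $\mathcal{B}$ disjoint; the composed map is the desired $Q\in\mathcal{C}(\mathcal{Y},\{0,1\})$. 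By construction the pair $(x,x')$ witnesses that $Q\circ\mathcal{W}$ is non-${e}^{\mathrm{max}}$-symmetrizable.

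With this $Q$ in hand, Lemma~\ref{lem:binary_AVC_cappa} applies to $Q\circ\mathcal{W}$ because it has binary output and is non-symmetrizable, yielding $C_\mathrm{d}^\mathrm{max}(Q\circ\mathcal{W})=R_\mathrm{d}^\mathrm{max}(Q\circ\mathcal{W})$. Positivity then follows from the Separation Lemma (Lemma~\ref{lem:separation_lemma}) applied to the non-symmetrizable $Q\circ\mathcal{W}$. Combining with the data-processing inequality from the first paragraph and maximizing over $Q$ delivers $C_\mathrm{d}^\mathrm{max}(\mathcal{W})\geq\max_{Q\in\mathcal{C}(\mathcal{Y},\{0,1\})}R_\mathrm{d}^\mathrm{max}(Q\circ\mathcal{W})>0$, which is \eqref{eq:lower_bound_general}.

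The principal obstacle is Lemma~\ref{lem:bring_it_together} itself, i.e., showing that two disjoint compact convex subsets of a simplex can be carried to disjoint subsets of the one-dimensional (binary) simplex through a chain of channel-type dimension reductions; without it one would still obtain $C_\mathrm{d}^\mathrm{max}(\mathcal{W})>0$ from the Separation Lemma alone but no explicit single-letter lower bound. A secondary technical point is to verify that the quantizations produced by Lemma~\ref{lem:bring_it_together} may be chosen as deterministic output-collapsing channels, which is what keeps the lifted decoder in the first paragraph a legitimate unassisted deterministic decoder in the sense of Definition~\ref{def:det_code}.
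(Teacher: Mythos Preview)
Your approach matches the paper's: both arguments combine Lemma~\ref{lem:bring_it_together} (existence of a non-symmetrizability-preserving binary quantizer $Q$), Lemma~\ref{lem:binary_AVC_cappa} (the Ahlswede--Wolfowitz formula for the quantized AVC), and Lemma~\ref{lem:separation_lemma} (positivity). You make explicit the data-processing step $C_{\mathrm d}^{\max}(\mathcal W)\ge C_{\mathrm d}^{\max}(Q\circ\mathcal W)$, which the paper leaves implicit.

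One point needs correction. Your assertion that the quantizer produced by Lemma~\ref{lem:bring_it_together} ``turns out to be deterministic (a symbol-collapsing map)'' is not borne out by that lemma's construction: each $Q_i$ is assembled from a shrinking $N_{\lambda(d)}=\lambda\,Id+(1-\lambda)T$, a rotation $O$, and a projection $V_i$ sending $\delta_{d-i}$ to a strictly positive convex combination $\sum_j p_{i,j}\delta_j$. None of these is a $\{0,1\}$-valued stochastic matrix, and the composite $Q$ is a genuinely non-deterministic channel. Consequently your preimage construction $\mathcal D_m:=(Q^{\otimes n})^{-1}(\mathcal D_m^{\mathrm{bin}})$ does not produce disjoint decoding sets and the lifted decoder is not an unassisted deterministic decoder in the sense of Definition~\ref{def:det_code}. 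The natural repair is to let the receiver simulate $Q$ with private randomness, which reproduces the $Q\circ\mathcal W$ maximum error exactly, and then supply the (short) argument that private decoder randomness does not enlarge the deterministic max-error AVC capacity; alternatively, argue $C_{\mathrm d}^{\max}(\mathcal W)\ge C_{\mathrm d}^{\max}(Q\circ\mathcal W)$ directly. The paper's proof skips this same step, so this is a shared gap to tighten rather than a divergence from the intended line.
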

	\begin{proof}[Proof of Theorem~\ref{thm:when_positive_than_lower_bound}]
	Since $\mathcal{W}$ is non-symmetrizable by assumption, Lemma~\ref{lem:separation_lemma} implies $C_\mathrm{d}^\mathrm{max}(\mathcal{W})>0$. According to Lemma~\ref{lem:bring_it_together}, there exists a $Q\in\mathcal{C}(\mathcal{Y},\{0,1\})$ such that
	\begin{align}
		\conv(\{W\left(\delta_x\otimes\delta_s\right)\}_{s\in\mathcal S})\cap\conv(\{W\left(\delta_{x'}\otimes\delta_{s}\right)\}_{s\in\mathcal S})=\emptyset,
	\end{align}
	implies $\conv(\{Q(\delta_y)\circ W\left(\delta_x\otimes\delta_s\right)\}_{s\in\mathcal S})\cap\conv(\{Q(\delta_y)\circ W\left(\delta_{x'}\otimes\delta_{s}\right)\}_{s\in\mathcal S})=\emptyset$. Then Lemma~\ref{lem:binary_AVC_cappa} provides a lower bound for $C_\mathrm{d}^\mathrm{max}(\mathcal{W})$.
	\end{proof}
\end{section}
\begin{section}{Block-Restricted Jamming}\label{sec:temporal_prop}
In practical situations there may exist several restrictions for the jammer in the AVC framework. These can, for example, be power limitations (cf.~\cite{2627}), delay-constraints (cf.~\cite{dey2010coding}) and/or local state constraints (cf.~\cite{arendt2017}). In order to clearly distinguish from previous works, we focus on a situation where the jammer is limited to large-scale adjustments concerning his state-selection abilities in time in this contribution. This constellation could, for example, be found in a block fading setting.
	\begin{definition}[Block-Restricted Jamming]\label{def:block-restr_jamming}
	Let $\mathcal W\in\mathcal C(\mathcal X\times\mathcal S,\mathcal Y)$ denote an AVC and let $J,n\in\nn$ with $n\geq J$. A jammer is said to be $J$-block-restricted if his choice of states is restricted to the set
	\begin{align}
		\mathcal S_{J,n}:=\left\{s^n\in\mathcal{S}:\left\lceil\frac{i}{J}\right\rceil=\left\lceil\frac{j}{J}\right\rceil\implies s_i=s_j\right\},
	\end{align}
	$\forall i,j\in\{1,\ldots,n\}$ with $i\neq j$. $\mathcal X_{J,n}$ is defined analogously. The maximum error probability of a code $\mathcal K_n$ with message set $\mathcal{M}_n$ and decoding sets $\mathcal{D}_m\subset\mathcal{Y}$ for which $\mathcal{D}_m\cap \mathcal{D}_{m'}=\emptyset$ for every $m\neq m'$ under $J$-block-restricted jamming is
		\begin{align}
		e^{\max}_{\mathrm{UA},J}(\mathcal K_n):=\max_{s^n\in\mathcal S_{J,n}}\max_{m\in\mathcal M_n}w^{\otimes n}\left(\mathcal D^{\complement}_m|x^n_m,s^n\right).
		\end{align}	
	Achievable rate and $C_{\mathrm{d},J}^\mathrm{\max}(\mathcal W)$ under $J$-block-restricted jamming are defined according to Definition~\ref{def:achievable_rate} and Definition~\ref{def:capacity}, respectively.	
	\end{definition}
	\begin{remark}
		Definition~\ref{def:block-restr_jamming} implicitly contains the assumption that transmitter and jammer are synchronized. The unsynchronized setting will be the subject of a separate study.
	\end{remark}
	The study of block-restricted jamming allows us to join the topics of invertibility, symmetrizability and capacity of an AVC in a practical meaningful way.
	\setlength{\belowcaptionskip}{-3pt}
	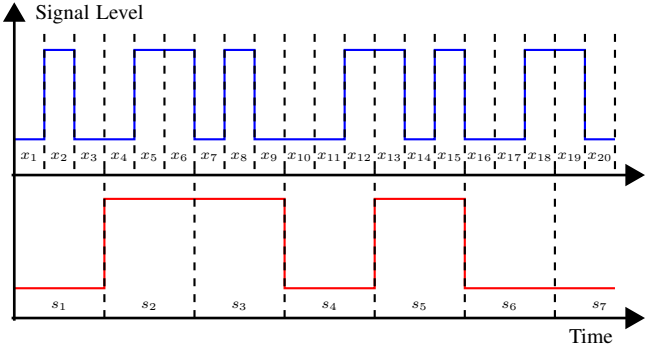
\begin{figure}
	\centering
		\begin{tikzpicture}[thick,scale=0.79, every node/.style={transform shape}]
		\draw[color=blue] (-3,-2) -- (-2.5,-2) -- (-2.5,-0.5) -- (-2,-0.5) -- (-2,-2) -- (-1.5,-2) -- (-1,-2) -- (-1,-0.5) -- (0,-0.5) -- (0,-2) -- (0.5,-2) -- (0.5,-0.5) -- (1,-0.5) -- (1,-2) -- (2.5,-2) -- (2.5,-0.5) -- (3.5,-0.5) -- (3.5,-2) -- (4,-2) -- (4,-0.5) -- (4.5,-0.5) -- (4.5,-2) -- (5,-2) -- (5.5,-2) -- (5.5,-0.5) -- (6.5,-0.5) -- (6.5,-2) -- (7,-2);
		\draw[color=red] (-3,-4.5) -- (-1.5,-4.5) -- (-1.5,-3) -- (0,-3) -- (1.5,-3) -- (1.5,-4.5) -- (3,-4.5) -- (3,-3) -- (4.5,-3) -- (4.5,-4.5) -- (6,-4.5) -- (7,-4.5);
		\draw[-triangle 60,line width=1pt, shorten <= -1pt] (-3,-5) -- (7.5,-5);
		\draw[-triangle 60,line width=1pt, shorten <= -1pt] (-3,-2.6) -- (7.5,-2.6);
		\draw[-triangle 60,line width=1pt, shorten <= -1pt] (-3,-5) -- (-3,0.3);
		\draw[color=black,style=dashed] (-1.5,-5) -- (-1.5,-2.5);
		\draw[color=black,style=dashed] (0,-5) -- (0,-2.5);
		\draw[color=black,style=dashed] (1.5,-5) -- (1.5,-2.5);
		\draw[color=black,style=dashed] (3,-5) -- (3,-2.5);
		\draw[color=black,style=dashed] (4.5,-5) -- (4.5,-2.5);
		\draw[color=black,style=dashed] (6,-5) -- (6,-2.5);
		\draw[color=black,style=dashed] (-2.5,-2.5) -- (-2.5,-0.1);
		\draw[color=black,style=dashed] (-2,-2.5) -- (-2,-0.1);
		\draw[color=black,style=dashed] (-1.5,-2.5) -- (-1.5,-0.1);
		\draw[color=black,style=dashed] (-1,-2.5) -- (-1,-0.1);
		\draw[color=black,style=dashed] (-0.5,-2.5) -- (-0.5,-0.1);
		\draw[color=black,style=dashed] (0,-2.5) -- (0,-0.1);
		\draw[color=black,style=dashed] (0.5,-2.5) -- (0.5,-0.1);
		\draw[color=black,style=dashed] (1,-2.5) -- (1,-0.1);
		\draw[color=black,style=dashed] (1.5,-2.5) -- (1.5,-0.1);
		\draw[color=black,style=dashed] (2,-2.5) -- (2,-0.1);
		\draw[color=black,style=dashed] (2.5,-2.5) -- (2.5,-0.1);
		\draw[color=black,style=dashed] (3,-2.5) -- (3,-0.1);
		\draw[color=black,style=dashed] (3.5,-2.5) -- (3.5,-0.1);
		\draw[color=black,style=dashed] (4,-2.5) -- (4,-0.1);
		\draw[color=black,style=dashed] (4.5,-2.5) -- (4.5,-0.1);
		\draw[color=black,style=dashed] (5,-2.5) -- (5,-0.1);
		\draw[color=black,style=dashed] (5.5,-2.5) -- (5.5,-0.1);
		\draw[color=black,style=dashed] (6,-2.5) -- (6,-0.1);
		\draw[color=black,style=dashed] (6.5,-2.5) -- (6.5,-0.1);
		\draw[color=black,style=dashed] (7,-2.5) -- (7,-0.1);
		\node (A1) at (-2.75,-2.3)[text width=0.5cm,align=center] {\textcolor{black}{{\scriptsize $x_1$}}};
		\node (A2) at (-2.25,-2.3)[text width=1cm,align=center] {\textcolor{black}{{\scriptsize $x_2$}}};
		\node (A3) at (-1.75,-2.3)[text width=2cm,align=center] {\textcolor{black}{{\scriptsize $x_3$}}};
		\node (A4) at (-1.25,-2.3)[text width=2cm,align=center] {\textcolor{black}{{\scriptsize $x_4$}}};
		\node (A5) at (-0.75,-2.3)[text width=2cm,align=center] {\textcolor{black}{{\scriptsize $x_5$}}};
		\node (A6) at (-0.25,-2.3)[text width=2cm,align=center] {\textcolor{black}{{\scriptsize $x_6$}}};
		\node (A7) at (0.25,-2.3)[text width=2cm,align=center] {\textcolor{black}{{\scriptsize $x_7$}}};
		\node (A8) at (0.75,-2.3)[text width=2cm,align=center] {\textcolor{black}{{\scriptsize $x_8$}}};
		\node (A9) at (1.25,-2.3)[text width=2cm,align=center] {\textcolor{black}{{\scriptsize $x_9$}}};
		\node (A10) at (1.75,-2.3)[text width=2cm,align=center] {\textcolor{black}{{\scriptsize $x_{10}$}}};
		\node (A11) at (2.25,-2.3)[text width=2cm,align=center] {\textcolor{black}{{\scriptsize $x_{11}$}}};
		\node (A12) at (2.75,-2.3)[text width=2cm,align=center] {\textcolor{black}{{\scriptsize $x_{12}$}}};
		\node (A13) at (3.25,-2.3)[text width=2cm,align=center] {\textcolor{black}{{\scriptsize $x_{13}$}}};
		\node (A14) at (3.75,-2.3)[text width=2cm,align=center] {\textcolor{black}{{\scriptsize $x_{14}$}}};
		\node (A15) at (4.25,-2.3)[text width=2cm,align=center] {\textcolor{black}{{\scriptsize $x_{15}$}}};
		\node (A16) at (4.75,-2.3)[text width=2cm,align=center] {\textcolor{black}{{\scriptsize 	$x_{16}$}}};
		\node (A17) at (5.25,-2.3)[text width=2cm,align=center] {\textcolor{black}{{\scriptsize $x_{17}$}}};
		\node (A18) at (5.75,-2.3)[text width=2cm,align=center] {\textcolor{black}{{\scriptsize $x_{18}$}}};
		\node (A19) at (6.25,-2.3)[text width=2cm,align=center] {\textcolor{black}{{\scriptsize $x_{19}$}}};
		\node (A20) at (6.75,-2.3)[text width=2cm,align=center] {\textcolor{black}{{\scriptsize $x_{20}$}}};
		\node (B1) at (-2.25,-4.8)[text width=1cm,align=center] {\textcolor{black}{{\scriptsize $s_1$}}};
		\node (B2) at (-0.75,-4.8)[text width=2cm,align=center] {\textcolor{black}{{\scriptsize $s_2$}}};
		\node (B3) at (0.75,-4.8)[text width=2cm,align=center] {\textcolor{black}{{\scriptsize $s_3$}}};
		\node (B4) at (2.25,-4.8)[text width=2cm,align=center] {\textcolor{black}{{\scriptsize $s_4$}}};
		\node (B5) at (3.75,-4.8)[text width=2cm,align=center] {\textcolor{black}{{\scriptsize $s_5$}}};
		\node (B6) at (5.25,-4.8)[text width=2cm,align=center] {\textcolor{black}{{\scriptsize $s_6$}}};
		\node (B7) at (6.75,-4.8)[text width=2cm,align=center] {\textcolor{black}{{\scriptsize $s_7$}}};
		\node (B7) at (6.6,-5.3)[text width=2cm,align=center] {\textcolor{black}{{Time}}};
		\node (B7) at (-1.75,0.1)[text width=2cm,align=center] {\textcolor{black}{{Signal Level}}};
		\end{tikzpicture}
	\caption{Block-restricted jamming for $J=3$ and $|\mathcal{X}|=|\mathcal{S}|=2$.}
	\label{fig:block_constrained_jamming}
	\end{figure}
The block-restricted jamming scenario is displayed in Figure~\ref{fig:block_constrained_jamming}. For reliable communication based on application of Theorem~\ref{thm:inv_implies_symm} we require $|\mathcal{Y}|\geq|\mathcal{X}|$. Usually, communication systems are designed such that $|\mathcal{Y}|=|\mathcal{X}|$. Nevertheless, in the AVC setting, the input alphabet is ``fanned" out by the influence of the jammer. One question is whether it is possible to recover from that unsatisfactory constellation. A partial answer is given by \cite[Theorem~1]{1412.5831} showing that sufficiently enlarging the output alphabet ensures invertibility.
	\begin{lemma}\label{lem:inversification}
	Let $W\in\mathcal C(\mathcal X,\mathcal Y)$ satisfy $W(\delta_x)\neq W(\delta_{x'})$ for all $x\neq x'\in\cX$. Then $J\geq |\mathcal X|\cdot(|\mathcal Y|-1)$ is sufficient for $W^{(J)}$ defined via $W^{(J)}(\delta_x):=\otimes_{i=1}^JW(\delta_x)$ to be invertible as a map from $\cP^{(J)}(\mathcal X)$ to $\cP\left(\mathcal Y^J\right)$.
	\end{lemma}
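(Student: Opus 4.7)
The plan is to reduce the claim to the linear independence of $\{W(\delta_x)^{\otimes J}\}_{x\in\mathcal X}$ as vectors in $(\mathbb R^{|\mathcal Y|})^{\otimes J}$. Since $\cP^{(J)}(\mathcal X)$ consists of convex combinations of the $\delta_x^{\otimes J}$ and $W^{(J)}$ acts linearly, sending $\delta_x^{\otimes J}\mapsto W(\delta_x)^{\otimes J}$, invertibility of $W^{(J)}$ on $\cP^{(J)}(\mathcal X)$ will follow once we show that these $|\mathcal X|$ image vectors are linearly independent.

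My strategy is to exhibit, for each $x\in\mathcal X$, a ``dual'' symmetric tensor $T_x\in(\mathbb R^{|\mathcal Y|})^{\otimes J}$ satisfying $\langle T_x,W(\delta_{x'})^{\otimes J}\rangle=\delta(x,x')$; pairing $T_x$ against any vanishing linear combination $\sum_{x'}c_{x'}W(\delta_{x'})^{\otimes J}=0$ will then force $c_x=0$. Under the standard identification of symmetric $J$-tensors on $\mathbb R^{|\mathcal Y|}$ with homogeneous polynomials of degree $J$ in $|\mathcal Y|$ variables via $T\leftrightarrow p_T(v):=\langle T,v^{\otimes J}\rangle$, the task becomes: for each $x$, construct a homogeneous polynomial $p_x$ of degree $J$ on $\mathbb R^{|\mathcal Y|}$ with $p_x(W(\delta_{x'}))=\delta(x,x')$ for every $x'\in\mathcal X$.

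The construction is Lagrange-style. For each pair $x\neq x'$, the hypothesis $W(\delta_x)\neq W(\delta_{x'})$ yields a coordinate $y(x,x')\in\mathcal Y$ with $w(y(x,x')|x)\neq w(y(x,x')|x')$, so the affine-linear functional $\ell_{x,x'}(v):=v_{y(x,x')}-w(y(x,x')|x')$ vanishes at $W(\delta_{x'})$ but not at $W(\delta_x)$. The product
\begin{equation*}
q_x(v):=\prod_{x'\neq x}\frac{\ell_{x,x'}(v)}{\ell_{x,x'}(W(\delta_x))}
\end{equation*}
has total degree $|\mathcal X|-1$ and satisfies $q_x(W(\delta_{x'}))=\delta(x,x')$ by construction. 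To lift $q_x$ to a homogeneous polynomial of degree exactly $J$, I would replace each monomial $v^\alpha$ appearing in $q_x$ by $v^\alpha\bigl(\sum_{y\in\mathcal Y}v_y\bigr)^{J-|\alpha|}$; on the simplex $\cP(\mathcal Y)$ we have $\sum_y v_y=1$, so the values at each $W(\delta_{x'})$ are preserved, and the resulting $p_x$ supplies the desired $T_x$.

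The main obstacle is ensuring the homogenization step is legitimate, which requires $J\geq\deg q_x=|\mathcal X|-1$. The stated hypothesis $J\geq|\mathcal X|\cdot(|\mathcal Y|-1)$ comfortably covers this (note that $|\mathcal Y|=1$ is excluded by the distinctness hypothesis unless $|\mathcal X|\leq 1$, in which case the claim is trivial). In fact the argument above yields the sharper sufficient condition $J\geq|\mathcal X|-1$; the looser stated bound likely reflects a different, more coordinate-wise construction from \cite{1412.5831} and is more than enough for our purposes.
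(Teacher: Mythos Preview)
Your argument is correct and complete. The reduction to linear independence of the $|\mathcal X|$ vectors $W(\delta_x)^{\otimes J}$ is the right move, the identification of symmetric tensors with homogeneous polynomials is standard, and the Lagrange-type construction together with homogenization by the linear form $\sum_y v_y$ (which is identically $1$ on $\cP(\mathcal Y)$) produces the dual functionals you need. The edge case $|\mathcal Y|=1$ is handled correctly.

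As for comparison: the paper does not actually prove this lemma; it imports the statement from \cite[Theorem~1]{1412.5831}. Your proof is therefore not a reproduction of the paper's argument but a self-contained alternative. The cited reference obtains the bound $J\geq|\mathcal X|\cdot(|\mathcal Y|-1)$ through a different, more structural route; your interpolation argument is more elementary and, as you observe, yields the strictly sharper sufficient condition $J\geq|\mathcal X|-1$, independent of $|\mathcal Y|$. (A quick sanity check with $|\mathcal Y|=2$, $|\mathcal X|=3$ confirms this: the three points $(a_i,1-a_i)$ with distinct $a_i$ give $p_i^{\otimes 2}$ whose symmetric coordinates $(a_i^2,a_i(1-a_i),(1-a_i)^2)$ form a Vandermonde-type system of rank $3$, so $J=2$ already suffices while the paper's bound would demand $J\geq 3$.) What the coarser bound from \cite{1412.5831} buys in the paper's context is simply a citable black box; your approach buys a tighter and fully explicit estimate at no extra cost.
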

The following theorem makes use of Lemma~\ref{lem:inversification} to obtain a lower bound on the maximum error capacity under block-restricted jamming. In the remainder, $\mathcal{X}$, $\mathcal{Y}$ and $\mathcal{S}$ denote finite sets and $W\in\mathcal C(\mathcal X\times\mathcal S,\mathcal Y)$ an AVC.
\begin{theorem}\label{thm:inversification_AVC}
	Let the AVC $W$ have the property $w(\cdot|x,s)\neq w\left(\cdot\middle|x',s'\right)$ for all $x,x'\in\mathcal X$ and $s,s'\in\mathcal S$ satisfying $(x,s)\neq(x',s')$. Let $E_\mathcal{X}\in\mathcal{C}(\mathcal{X},\mathcal{X}_{J})$ and $E_\mathcal{S}\in\mathcal{C}(\mathcal{S},\mathcal{S}_{J})$ with $e_x\left(x^{J}\middle|x\right):=\prod_{j=1}^{J}\delta(x_{j},x)$ and $e_s\left(s^{J}\middle|s\right):=\prod_{j=1}^{J}\delta(s_{j},s)$, respectively. There exists $J\in\nn$ such that $C_{\mathrm d,J}^{\max}(\mathcal W)>0$. If $J\geq|\mathcal X|\cdot|\mathcal S|\cdot(|\mathcal Y|-1)$, then
	\begin{align}
	C_{\mathrm d,J}^{\max}(\mathcal W)\geq \frac{1}{J}R_{\mathrm d}^{\max}\left(Q^\star\circ W^{\otimes J}\circ (E_\mathcal{X}\otimes E_\mathcal{S})\right),
	\end{align}
	where $Q^\star\in\mathcal{C}\left(\mathcal{Y}^J,\{0,1\}\right)$ is the the optimal quantizer with respect to $W$ solving the outer maximization in \eqref{eq:lower_bound_general}.
\end{theorem}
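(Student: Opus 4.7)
The plan is to reduce communication over the block-restricted AVC $\mathcal W$ to a single-letter problem on a ``super-AVC'' built from $J$ consecutive channel uses. By Definition~\ref{def:block-restr_jamming} every $s^n\in\mathcal S_{J,n}$ is constant on each block of length $J$; if the encoder also applies $E_{\mathcal X}$ within every block, then one block acts as the channel
\[
\tilde W(\delta_x\otimes\delta_s):=W^{\otimes J}\bigl((E_{\mathcal X}\otimes E_{\mathcal S})(\delta_x\otimes\delta_s)\bigr)=w(\cdot|x,s)^{\otimes J},
\]
which defines an AVC $\tilde{\mathcal W}\in\mathcal C(\mathcal X\times\mathcal S,\mathcal Y^J)$. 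A length-$k$ rate-$R$ code for $\tilde{\mathcal W}$ extends, by applying $E_{\mathcal X}$ coordinatewise, to a block-constant length-$n=kJ$ code for $\mathcal W$ of rate $R/J$, and every sequence in $\mathcal S_{J,n}$ is the $E_{\mathcal S}$-expansion of some $s^k\in\mathcal S^k$, so the induced maximum-error probabilities coincide.

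It therefore suffices to prove $C_{\mathrm d}^{\max}(\tilde{\mathcal W})\geq R_{\mathrm d}^{\max}(Q^\star\circ\tilde{\mathcal W})>0$ whenever $J\geq|\mathcal X|\cdot|\mathcal S|\cdot(|\mathcal Y|-1)$. To this end I would introduce the auxiliary DMC $W'\in\mathcal C(\mathcal X\times\mathcal S,\mathcal Y)$ with $w'(y|(x,s)):=w(y|x,s)$. By hypothesis $W'(\delta_{(x,s)})\neq W'(\delta_{(x',s')})$ for every $(x,s)\neq(x',s')$, so Lemma~\ref{lem:inversification}, applied to $W'$ with $|\mathcal X\times\mathcal S|=|\mathcal X|\cdot|\mathcal S|$ input symbols, yields that ${W'}^{(J)}$ is invertible on $\cP^{(J)}(\mathcal X\times\mathcal S)$ as soon as $J\geq|\mathcal X|\cdot|\mathcal S|\cdot(|\mathcal Y|-1)$. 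But ${W'}^{(J)}$ is precisely $\tilde W$ regarded as a channel from $\mathcal X\times\mathcal S$ to $\mathcal Y^J$, so $\tilde{\mathcal W}$ is left-invertible in the sense of Definition~\ref{def:left-invertibility}. Theorem~\ref{thm:inv_implies_symm} then implies that $\tilde{\mathcal W}$ is not $e^{\max}$-symm., Lemma~\ref{lem:separation_lemma} delivers $C_{\mathrm d}^{\max}(\tilde{\mathcal W})>0$, and Theorem~\ref{thm:when_positive_than_lower_bound} supplies the binary quantizer $Q^\star\in\mathcal C(\mathcal Y^J,\{0,1\})$ together with the lower bound $C_{\mathrm d}^{\max}(\tilde{\mathcal W})\geq R_{\mathrm d}^{\max}(Q^\star\circ\tilde{\mathcal W})$.

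Combining the two reductions, a length-$k$ super-code for $\tilde{\mathcal W}$ whose rate approaches $R_{\mathrm d}^{\max}(Q^\star\circ\tilde{\mathcal W})$ gives a length-$kJ$ code for $\mathcal W$ under $J$-block-restricted jamming of rate $\tfrac{1}{J}R_{\mathrm d}^{\max}(Q^\star\circ\tilde{\mathcal W})$, which is the claimed lower bound; the mere positivity statement then follows from taking any admissible $J$. The step I expect to be the main obstacle is the bookkeeping of the first reduction, in particular verifying that (i) restricting the encoder to block-constant codewords does not forfeit achievability for maximum-error deterministic coding, (ii) the correspondence $\mathcal S_{J,n}\leftrightarrow\mathcal S^k$ genuinely preserves the inner maximum over state sequences, and (iii) block lengths $n$ not divisible by $J$ can be absorbed with negligible rate loss. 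Once this reduction is fully justified, invoking the results of Section~\ref{sec:main_Results} is direct.
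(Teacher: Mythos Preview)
Your proposal is correct and follows essentially the same route as the paper: invoke Lemma~\ref{lem:inversification} on the $J$-fold repetition channel to obtain left-invertibility, apply Theorem~\ref{thm:inv_implies_symm} to conclude non-$e^{\max}$-symmetrizability, and then Theorem~\ref{thm:when_positive_than_lower_bound} for the quantized lower bound. The paper's proof compresses this into three lines and leaves the block-to-super-letter reduction (your points (i)--(iii)) entirely implicit, so your expanded bookkeeping is a welcome addition rather than a deviation.
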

\begin{remark}
	Observe that the restriction on the channels in Theorem~\ref{thm:inversification_AVC} only excludes very specific constellations of channel parameters and thus a small set of AVCs. Moreover, notice that we exploit the repetitive usage of the communication channel during a constant jammer signal which is a fundamental difference compared to simply enlarging the input alphabet of the transmitter which does not necessarily guarantee non-symmetrizability of an AVC.
\end{remark}
\begin{remark}
	The wide spread application of Shannon and coding theory to modern communication systems has rightfully lead to abandoning of repetition coding. Theorem~\ref{thm:inversification_AVC} explains why smaller numbers of repetitions could be an enabler for reliable communication in ad-hoc networks.
\end{remark}
\begin{proof}[Proof of Theorem~\ref{thm:inversification_AVC}]
	Lemma~\ref{lem:inversification} ensures invertibility of the matrix associated to $\mathcal{W}$. Theorem~\ref{thm:inv_implies_symm} then implies non $e^\mathrm{max}$-symm. The result follows directly from Theorem~\ref{thm:when_positive_than_lower_bound}.
\end{proof}
\begin{theorem}\label{thm:inversification_AVC_upper}
	Let the preliminaries be as in Theorem~\ref{thm:inversification_AVC}. An upper bound for $C_{\mathrm d,J}^{\max}(\mathcal W)$ is given by
		\begin{align}
		C_{\mathrm d,J}^{\max}(\mathcal W)\leq\min_{q\in\mathcal{P}(\mathcal S)}\max_{p\in\mathcal{P}(\mathcal{X}^{J})}\frac{1}{J}I\left(p\otimes q_J;W^{\otimes J}(Id\otimes E_\mathcal{S})\right).
		\end{align}
	\end{theorem}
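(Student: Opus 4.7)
The plan is to apply the standard ``random jammer'' converse technique, using the block structure to reduce the $J$-block-restricted $n$-letter problem to coding over a DMC whose letters are length-$J$ super-symbols. Write $n=kJ$ for simplicity (the residue, of length less than $J$, contributes only $o(1)$ to the rate and is absorbed into $\delta$ in Definition~\ref{def:achievable_rate}). Organize the $n$ channel uses into $k$ consecutive blocks of length $J$. Every admissible state sequence $s^n\in\mathcal{S}_{J,n}$ is constant on each block and is therefore parameterized by a per-block state sequence $(\tilde s_1,\ldots,\tilde s_k)\in\mathcal{S}^k$. The block-restricted AVC thus becomes equivalent to an \emph{unrestricted} AVC $\widetilde{\mathcal{W}}=(\widetilde W_{\tilde s})_{\tilde s\in\mathcal{S}}$ on super-alphabets $\mathcal{X}^J\to\mathcal{Y}^J$, used $k$ times, whose transition law is precisely the state-$\tilde s$ slice of $W^{\otimes J}(Id\otimes E_\mathcal{S})$.

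Next, fix any $q\in\mathcal{P}(\mathcal{S})$ and suppose the jammer draws each super-state $\tilde s_b$ independently from $q$. By hypothesis the code $\mathcal{K}_n$ satisfies $w^{\otimes n}(\mathcal{D}_m^{\complement}|f(m),s^n)<\epsilon$ for every message $m$ and every $s^n\in\mathcal{S}_{J,n}$; averaging the left-hand side against $q^{\otimes k}$ (applied to super-states) preserves the bound \emph{per message}. Consequently $\mathcal{K}_n$ is a valid code on the DMC $(\widetilde W_q)^{\otimes k}$ with maximum error below $\epsilon$, where
$\widetilde W_q(y^J|x^J):=\sum_{\tilde s\in\mathcal{S}}q(\tilde s)\,[W^{\otimes J}(Id\otimes E_\mathcal{S})](y^J|x^J,\tilde s)$.

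Now invoke Shannon's converse for the DMC $\widetilde W_q$: since for a fixed DMC the deterministic maximum-error capacity coincides with the average-error capacity, we obtain $\tfrac{1}{k}\log|\mathcal{M}_n|\leq \max_{p\in\mathcal{P}(\mathcal{X}^J)} I(p;\widetilde W_q)+o(1)$, i.e.\ per original channel symbol, $\tfrac{1}{n}\log|\mathcal{M}_n|\leq \tfrac{1}{J}\max_p I(p;\widetilde W_q)+o(1)$. In the theorem's notation the inner mutual information equals $I(p\otimes q_J;\,W^{\otimes J}(Id\otimes E_\mathcal{S}))$, with $q_J$ being $q$ pushed through $E_\mathcal{S}$ to the diagonal of $\mathcal{S}_J$. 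Since the bound is valid for every $q$, we may pass to the infimum over $q$, which yields the claimed $\min_q\max_p$ upper bound. No minimax interchange is required: the jammer commits to $q$ first and the encoder optimizes afterwards, matching exactly the quantifier order in the statement. The only routine care concerns the residual partial block when $J\nmid n$, handled by standard asymptotics; notably, nothing in this converse direction needs the non-symmetrizability or invertibility hypotheses that were central to the lower bound of Theorem~\ref{thm:inversification_AVC}.
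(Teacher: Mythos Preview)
Your proposal is correct and follows essentially the same route as the paper: fix an arbitrary per-block i.i.d.\ jammer distribution $q$, observe that any code meeting the block-restricted maximum-error criterion is automatically a good code for the resulting DMC on $J$-super-symbols, apply the DMC converse, and then minimize over $q$. The paper spells out the DMC converse via Fano's inequality plus the chain rule and the block-memoryless property, whereas you invoke Shannon's DMC converse as a black box; this is a difference in packaging, not in substance. Your remark that no minimax interchange is needed is in fact cleaner than the paper's presentation, which at one point writes the bound as $\max_p\min_{q_J}$ and then appeals to convexity--concavity to restore the $\min_q\max_p$ order of the statement---an unnecessary detour, since (as you note) the bound already holds for each fixed $q$.
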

	\begin{proof}[Proof of Theorem~\ref{thm:inversification_AVC_upper}]
	Assume that there exists a ($n$,$\mathcal{M}_n$)-code for the AVC under $J$-block-restricted jamming such that
	\begin{align}
	\frac{1}{n}\log|\mathcal{M}_n|>R-\eps_n,
	\end{align}
	and
	\begin{align}\label{eqn:error_criterion}
	e^{\max}_{\mathrm{UA},J}(\mathcal K_n)<\eps_n.
	\end{align}
	It follows that there exists a code with the property that $\overline{e}_\mathrm{UA}(\mathcal{K}_n,q)<\eps$, where $\overline{e}_\mathrm{UA}(\mathcal{K}_n,q)$ defines the average error of the DMC induced by the i.i.d. jamming strategy $q_J\in\mathcal{P}(\mathcal{S_J})$ under the usage of the code $\mathcal{K}_n$, that is,
	\begin{align}
	\begin{split}
	\overline{e}_\mathrm{UA}(\mathcal{K}_n,q_J):=&\frac{1}{|\mathcal{M}_n|}\sum_{m\in\mathcal{M}_n}\left(w_{q_J}^{\otimes \lfloor n/J\rfloor}\otimes w_{q_J}^{\otimes r}\left(\mathcal{D}^{\complement}_m|x_m^n\right)\right),
	\end{split}
	\end{align}
	where $r$ is the remainder of the division $n/J$. Thus
	\begin{align}
	nR&\leq H(M)\\
	&=H\left(M\middle|\hat{M}\right)+I\left(M;\hat{M}\right)\\
	&\leq 1+nR\overline{e}_{\mathrm{UA},J}+I\left(M;\hat{M}\right)\label{eqn:fano_ineq}\\
	&\leq 1+nR\eps_n+I\left(X^n;Y^n\right)\label{eqn:data_proc_ineq},
	\end{align}
		where \eqref{eqn:fano_ineq} follows from Fano's Inequality and \eqref{eqn:data_proc_ineq} from Data Processing Inequality and \eqref{eqn:error_criterion}. In the following, we concentrate on $I\left(X^n\middle|Y^n\right)$. Let
		\begin{align}
		\begin{split}
			\hat{X}^{\lceil n/J\rceil}:=&X_1,\ldots,X_{J},X_{J+1},\ldots,X_{2J},X_{2J+1},\\
			&\ldots,X_{\lfloor n/J\rfloor},X_{\lfloor n/J\rfloor+1},\ldots,X_n,
		\end{split}
		\end{align}
		 and
		 \begin{align}
		 \begin{split}
		 	\hat{Y}^{\lceil n/J\rceil}:=&Y_1,\ldots,Y_{J},Y_{J+1},\ldots,Y_{2J},Y_{2J+1},\\
		 	&\ldots,Y_{\lfloor n/J\rfloor},Y_{\lfloor n/J\rfloor+1},\ldots,Y_n.
		 \end{split}
		 \end{align}
		 Let $Y^{i-1}:=Y_1,Y_2,\ldots,Y_{i-1}$ with $i\in\nn$. For the DMC generated from the block-restricted AVC for a fixed i.i.d strategy $q_J\in\mathcal P(\mathcal S_J)$ of the Jammer, it holds
		 \begin{align}
		 I\left(X^n;Y^n\right)&=I\left(\hat{X}^{\lceil n/J\rceil};\hat{Y}^{\lceil n/J\rceil}\right)\\
		 &=H\left(\hat{Y}^{\lceil n/J\rceil}\right)-H\left(\hat{Y}^{\lceil n/J\rceil}\middle|\hat{X}^{\lceil n/J\rceil}\right)\\
		 &=H\left(\hat{Y}^{\lceil n/J\rceil}\right)-\sum_{k=1}^{\lfloor n/J\rfloor+1}H\left(\hat{Y}_i\middle|Y^{i-1},\hat{X}^n\right)\label{eqn:chain_rule_eq}\\
		 &\leq\sum_{k=1}^{\lfloor n/J\rfloor+1}H\left(\hat{Y}_k\right)-\sum_{k=1}^{\lfloor n/J\rfloor+1}H\left(\hat{Y}_k\middle|\hat{X}_k\right)\label{eqn:block_restr_eq}\\
		 &=\sum_{k=1}^{\lfloor n/J\rfloor+1}I\left(\hat{Y}_k;\hat{X}_k\right)\\
		 \begin{split}
		 &\leq\left\lfloor \frac{n}{J}\right\rfloor\max_{p\in\mathcal P\left(\mathcal X^J\right)}I\left(p\otimes q_J,W^{\otimes J}\right)\\&\quad+\log_2\left(X^J\right)\label{eqn:DMC_cappa_eq},
		 \end{split}
		 \end{align}
		 where \eqref{eqn:chain_rule_eq} follows by chain rule, \eqref{eqn:block_restr_eq} because the channel is memoryless from block to block ($\hat{Y}_i\text{---}\hat{X}_i\text{---}\hat{Y}_1,\ldots\hat{Y}_{i-1}$), that is, it holds $P_{\hat{Y}_n|\hat{Y}_1,\ldots \hat{Y}_n}=P_{\hat{Y}_n|\hat{Y}_{n-1}}$. Since \eqref{eqn:DMC_cappa_eq} holds for every choice $q_J\in\mathcal P(\mathcal S_J)$, we get
		 \begin{align}
		 \begin{split}
		 nR\leq 1&+nR\eps_n+\left\lfloor \frac{n}{J}\right\rfloor\max_{p\in\mathcal P(\mathcal X^J)}I\left(p\otimes q_J,W^{\otimes J}\right)\\&+\log_2\left(X^J\right).
		 \end{split}
		 \end{align}
		 We observe that (14) holds for all $q_J\in\mathcal P(\mathcal S_J)$. Thus, upon dividing by $n$ and minimizing over $q_J$ we obtain
		 \begin{align}
		 \begin{split}
		 R\leq \frac{1}{n}&+R\eps_n+\frac{\log_2\left(X^J\right)}{n}\\&+\left\lfloor \frac{n}{J}\right\rfloor\max_{p\in\mathcal P(\mathcal X^J)}\min_{q_J\in\mathcal{P}(\mathcal{S_J})}I\left(p\otimes q_J,W^{\otimes J}\right)\label{eq:rate_for_block_DMC}.
		 \end{split}
		 \end{align}
		 Now, we let $n\longrightarrow\infty$ in \eqref{eq:rate_for_block_DMC} and use the fact that the mutual information is convex in the channel parameters and concave in the input distribution resulting in
		 \begin{align}
		 R&\leq\min_{q_J\in\mathcal{P}(\mathcal{S_J})}\max_{p\in\mathcal{P}(\mathcal{X^J})}\frac{1}{J}I\left(p\otimes q_J,W^{\otimes J}\right)\\
		&=\min_{q\in\mathcal{P}(\mathcal{S})}\max_{p\in\mathcal{P}(\mathcal{X^J})}\frac{1}{J}I\left(p\otimes q,W^{\otimes J}\circ(Id\otimes E_{\mathcal S})\right),
		 \end{align}
		 which completes the converse proof.
	\end{proof}
\end{section}

\begin{section}{Conclusion}\label{sec:discussion_and_conclusion}
	In this contribution we developed an explicit connection relating the invertibility of an AVC channel matrix to the symmetrizability under the maximum error criterion. With the help of that relation we provided a lower bound on the maximum error capacity for invertible AVCs. We showed that reliable communication over symmetrizable AVCs under block-restricted jamming is possible exploiting time diversity by using repetition coding. This result guarantees reliable information transmission under all possible prevailing circumstances. Further we provided a lower and an upper bound on the maximum error capacity of the block-restricted AVC.
\end{section}
\begin{acknowledgement}
	C.A. thanks P. Fertl from the BMW Group. Funding is acknowledged from the DFG via grant BO 1734/20-1, the BMBF via grants 01BQ1050 and 16KIS0118 (H.B.), the BMWi and ESF via grant 03EFHSN102 (J.N.).
\end{acknowledgement}
\appendix
\addcontentsline{toc}{section}{Appendices}
		The Appendix mainly concentrates on the proof of Lemma~\ref{lem:bring_it_together}. First, Lemma~\ref{lem:shrinking and rotation} ensures that the quantized output distribution remains in $\mathcal{P}(\mathcal{Y})$. Second, Lemma~\ref{lem:projection} guarantees the existence of a disjointness preserving projection from $\Delta^{(d)}$ to $\Delta^{(d-1)}$ ensuring conservation of non $e^\mathrm{max}$-symm.
		\begin{lemma}[Shrinking and Rotation]\label{lem:shrinking and rotation}
		 Let $\Delta^{(d)}$ denote the probability simplex of dimension $d\in\mathbb{N}$ defined in Lemma~\ref{lem:binary_AVC_cappa}. For $x\in\Delta^{(d)}$, let $t(x):=\left|\left|x-\frac{1}{d}\sum_{i=0}^{d-1}\delta_i\right|\right|_2$ denote the distance of a point on the probability simplex to its center. Let $O$ be an arbitrary rotation for which it holds that $O(\mathbf{1})=\mathbf{1}$. Let {$N_{\lambda(d)}=\lambda(d)\cdot Id+(1-\lambda(d))T$} where $T:=\frac{1}{d}\sum_{i,j=0}^{d-1}E_{i,j}$. Then there exists $\lambda(d)\in[0,1]$ such that for every $x\in\Delta^{(d)}$ and every rotation $O$ with the previously mentioned properties, $N_{\lambda(d)}O(x)\in\Delta^{(d)}$.
		\end{lemma}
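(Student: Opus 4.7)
The plan is to recognise $N_{\lambda}$ as a homothety of the affine simplex hyperplane centered at the barycenter and $O$ as an isometry of that same hyperplane fixing the barycenter; the statement then reduces to a single inequality relating the inradius and circumradius of $\Delta^{(d)}$ about its center.

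First I will verify the following structural facts. Let $H := \{y \in \mathbb{R}^d : \mathbf{1}^T y = 1\}$ denote the affine hull of $\Delta^{(d)}$ and $c := \frac{1}{d}\mathbf{1}$ its barycenter. A direct computation using $T = \frac{1}{d}\mathbf{1}\mathbf{1}^T$ gives $T y = c$ for every $y \in H$, so on $H$ the map $N_{\lambda}$ reduces to $N_{\lambda} y = \lambda y + (1-\lambda)c$, which is a homothety centered at $c$ with ratio $\lambda$; in particular $N_{\lambda}(H) \subseteq H$. Orthogonality of $O$ together with $O\mathbf{1} = \mathbf{1}$ yields $\mathbf{1}^T O y = (O^T \mathbf{1})^T y = \mathbf{1}^T y$, so $O(H) \subseteq H$, and also $Oc = c$ together with $\|Oy - c\|_2 = \|y - c\|_2$. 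Combining these two observations gives, for every $x \in \Delta^{(d)}$,
\begin{align}
\|N_{\lambda} O x - c\|_2 \,=\, \lambda\,\|x - c\|_2,
\end{align}
with $N_{\lambda} O x \in H$.

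The remaining task is to choose $\lambda(d) \in [0,1]$ so that $\lambda(d)\, R_d \leq r_d$, where $R_d := \max_{y \in \Delta^{(d)}} \|y - c\|_2$ is the circumradius of $\Delta^{(d)}$ about $c$ and $r_d$ is its inradius in $H$ about $c$. Since $\Delta^{(d)} = H \cap \bigcap_i \{y_i \geq 0\}$, any point of $H$ within distance $r_d$ of $c$ is automatically in $\Delta^{(d)}$, so this choice forces $N_{\lambda(d)} O x \in \Delta^{(d)}$ for every $x \in \Delta^{(d)}$ and every admissible $O$. A short Lagrange-multiplier computation (attained at a vertex $\delta_i$ for $R_d$ and at a facet center for $r_d$) yields $R_d = \sqrt{(d-1)/d}$ and $r_d = 1/\sqrt{d(d-1)}$, hence $r_d/R_d = 1/(d-1)$, and setting $\lambda(d) := 1/(d-1) \in [0,1]$ finishes the proof. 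I expect the only mildly subtle point to be the structural reinterpretation in the second paragraph; once $N_{\lambda}$ is identified as a homothety of $H$ and $O$ as an isometry fixing $c$, the reduction to the classical inradius/circumradius ratio of the standard simplex is essentially automatic.
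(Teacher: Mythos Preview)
Your argument is correct and follows essentially the same route as the paper: both identify $N_{\lambda}$ as the contraction $y\mapsto \lambda(y-c)+c$ on the affine hyperplane $H$, use that the admissible rotation $O$ is an isometry of $H$ fixing $c$, and then choose $\lambda(d)=t_{\min}(d)/t_{\max}(d)=1/(d-1)$ from the explicit inradius and circumradius of the standard simplex. The only difference is presentational---you phrase the reduction in terms of homothety and the inradius/circumradius ratio, whereas the paper carries out the same computation in coordinates.
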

		\begin{proof}[Proof of Lemma~\ref{lem:shrinking and rotation}]
		First, consider the maximum distance $t_\mathrm{max}(d)$ of a point on the probability simplex of dimension $d\in\mathbb{N}_{\geq 2}$ to its origin in the $||\cdot||_2$ which is
		\begin{align}
		t_\mathrm{max}(d)&=\sqrt{\left(1-d^{-1}\right)^2+(d-1)d^{-2}}\\
		&=\sqrt{d-1}/\sqrt{d}.
		\end{align}
		The minimum distance $t_\mathrm{min}(d)$ (points on the boundary of $\Delta^{(d)}$ with exactly one coordinate being zero) reads as
		\begin{align}
		t_\mathrm{min}(d)&=\sqrt{d^{-2}+(d-1)\left(d^{-1}-(d-1)^{-1}\right)^2}\\
		&=1/\sqrt{d(d-1)}.
		\end{align}
		Now choose $\lambda(d)$ such that $\lambda(d)t_\mathrm{max}(d)\geq t_\mathrm{min}(d)$. Without loss of generality, let $\lambda(d)$ be chosen such that equality holds in the previous inequality, that is, $\lambda(d)=t_\mathrm{min}(d)/t_\mathrm{max}(d)$.
		Computing $N_{\lambda(d)}(x)$ gives
		\begin{align}
		N_{\lambda(d)}(x)=\lambda(d)\sum_{i=0}^{d-1}x_i\delta_i+(1-\lambda(d))\frac{1}{d}\sum_{i,j=0}^{d-1}x_j\delta_i.
		\end{align}
		Since $\sum_{i,j=0}^{d-1}x_j\delta_i=\sum_{i=0}^{d-1}\delta_i$, because $x_1+\dots+x_d=1$, it follows
		\begin{align}
		N_{\lambda(d)}(x)&=\left(\lambda(d)\left(x_i-\frac{1}{d}\right)+\frac{1}{d}\right)\sum_{i=0}^{d-1}\delta_i\\
		&=\lambda(d)(x-\pi)+\pi\label{eq:N_d_and_pi},
		\end{align}
		with $\pi=\frac{1}{d}(\delta_0+\delta_1+\dots+\delta_{d-1})$. Let $x$ satisfy $\|x-\pi\|\leq t_\mathrm{max}(d)$. Note that such vectors especially arise from application of an arbitrary rotation $O$ with the property $O(\pi)=\pi$ to an element of $\Delta^{(d)}$. Let $p=O(x)$ for some $x\in\Delta^{(d)}$ and an orthogonal transformation satisfying $O(\pi)=(\pi)$. Then for the distance between $N_{\lambda(d)}(p)$ and the center $\pi$ of $\Delta^{(d)}$ using \eqref{eq:N_d_and_pi} it holds
		\begin{align}
		\left|\left|N_{\lambda(d)}(p)-\pi\right|\right|_2
		&=\lambda(d)||p-\pi||_2\\
		&=\lambda(d)||O(x-\pi)||_2\\
		&\leq\lambda(d) t_\mathrm{max}(d)\label{eq:invariant_under_rotation}\\
		&=t_\mathrm{min}(d),
		\end{align}
		where \eqref{eq:invariant_under_rotation} follows, inter alia, from the property of the $||\cdot||_2$ being invariant under rotation. This proves the lemma.
	\end{proof}		
		\begin{lemma}[Disjointness Preserving Projection]\label{lem:projection}
		Let $\Delta^{(d)}$ denote the probability simplex of dimension $d\in\mathbb{N}$ defined in Lemma~\ref{lem:binary_AVC_cappa}. Let, for $v\neq 0$ and $c\in\mathbb{R}$, $\mathcal{H}:=\left\{x\in\mathbb{R}^d\middle|v^Tx=c\right\}$ be a hyperplane oriented such that $\delta_{d-1},\sum_{i=0}^{d-2}p_i\delta_i\in\mathcal{H}$ with fixed $p_i>0$ with $\sum_{i=0}^{d-2}p_i=1$. There exists a projection $V$ and a hyperplane $\tilde{\mathcal{H}}:=\left\{x\in\mathbb{R}^{d-1}\middle|\tilde{v}^Tx=c\right\}$ such that for every $q\in\Delta^{(d)}$ with $q=\mu\delta_{d-1}+(1-\mu)\left(\sum_{i=0}^{d-2}p'_i\delta_i\right)$, $p_i'\neq p_i$ and $p_i'\in[0,1]$ for which $\sum_{i=0}^{d-2}p'_i=1$, $\tilde{v}^TV(q)<c$ if $v^Tq<c$ and $\tilde{v}^TV(q)>c$ if $v^Tq>c$.
		\end{lemma}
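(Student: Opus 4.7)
The crucial feature of the setup is that the hyperplane $\mathcal{H}$ passes through the vertex $\delta_{d-1}$ of the simplex, so every point $q\in\Delta^{(d)}\setminus\{\delta_{d-1}\}$ lies on a unique segment from $\delta_{d-1}$ to a point of the opposite facet $F:=\conv(\delta_0,\ldots,\delta_{d-2})$, and this whole open segment sits on a single side of $\mathcal{H}$ (the side chosen by its endpoint on $F$). The plan is therefore to let $V$ be the central projection with apex $\delta_{d-1}$ onto $F$, where $F$ is identified with $\Delta^{(d-1)}\subset\mathbb{R}^{d-1}$ by dropping the last coordinate, and to let $\tilde v$ be the restriction of $v$ to the first $d-1$ coordinates.

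Concretely I would set
\begin{align}
V(q)_i := \frac{q_i}{1-q_{d-1}} \quad (i=0,\ldots,d-2), \qquad \tilde v := (v_0,\ldots,v_{d-2}),
\end{align}
check at once that $V(q)\in\Delta^{(d-1)}$, and use $\delta_{d-1}\in\mathcal{H}$ to read off $v_{d-1}=c$. The key identity then comes from a one-line expansion of $q=\mu\delta_{d-1}+(1-\mu)r$ with $r=\sum_{i=0}^{d-2}p'_i\delta_i$: the contribution of $\delta_{d-1}$ to $v^T q-c$ vanishes, leaving
\begin{align}
v^T q - c \;=\; (1-\mu)\bigl(v^T r - c\bigr) \;=\; (1-\mu)\bigl(\tilde v^{T} V(q) - c\bigr),
\end{align}
and since $q\neq\delta_{d-1}$ forces $1-\mu>0$, the sign of $v^T q-c$ agrees with that of $\tilde v^{T} V(q)-c$, which is precisely the claim.

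Only two loose ends remain. First, $\tilde v\neq 0$ so that $\tilde{\mathcal H}$ is really a hyperplane: if $\tilde v=0$ then $v$ is a scalar multiple of $\delta_{d-1}$, and the hypothesis $p=\sum p_i\delta_i\in\mathcal{H}$ with $p_{d-1}=0$ forces $c=0$ and hence $v=0$, contradicting $v\neq 0$. Second, the degenerate cases: if $q=\delta_{d-1}$ then $v^T q=c$ and the strict inequalities are both vacuous, while the hypothesis that some $p'_i\neq p_i$ rules out the other vacuous case $r=p$. I do not foresee any serious obstacle here; the whole argument is essentially the geometric remark that a hyperplane through a vertex of a simplex splits the rest of the simplex into two cones with apex at that vertex, and the central projection from that apex preserves these cones pointwise.
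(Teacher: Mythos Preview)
Your argument is correct for the lemma as stated, but it takes a genuinely different route from the paper. The paper does \emph{not} use the central projection from $\delta_{d-1}$; instead it defines a \emph{linear} (stochastic) map $V$ by $V(\delta_{d-1}):=p=\sum_{i=0}^{d-2}p_i\delta_i$ and $V(\delta_i):=\delta_i$ for $i<d-1$, so that $V(q)=\mu p+(1-\mu)r$ rather than your $V(q)=r$. With $\tilde v=(v_0,\ldots,v_{d-2})$ the paper then computes $\tilde v^{T}V(q)=\mu\sum_i v_ip_i+(1-\mu)\sum_i v_ip'_i=\mu c+(1-\mu)\tilde v^{T}r$, using the second hypothesis $p\in\mathcal H$ in the main calculation (you only need it for the non-degeneracy check $\tilde v\neq 0$). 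From there the sign comparison is the same one-line substitution you give.

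The trade-off is this: your central projection makes the algebra slightly shorter and isolates the geometric picture (cone over $\delta_{d-1}$) very cleanly, but $V(q)_i=q_i/(1-q_{d-1})$ is \emph{not} linear. The paper's $V$ is a bona fide channel in $\mathcal C(\{0,\ldots,d-1\},\{0,\ldots,d-2\})$, and this linearity is precisely what is needed downstream: in the proof of Lemma~\ref{lem:bring_it_together} the maps $Q_i$ are built as compositions of $N_\lambda$, a rotation, and $V$, and are explicitly required to be linear (indeed affine) so that the final composite yields a quantizer $Q\in\mathcal C(\mathcal Y,\{0,1\})$ in Theorem~\ref{thm:when_positive_than_lower_bound}. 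So while your proof establishes the lemma, the object it produces would not plug into the rest of the argument without modification.
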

		\begin{proof}[Proof of Lemma~\ref{lem:projection}]
		Define the projection $V$ in the following way: $V\left(\delta_{d-1}\right):=\sum_{i=1}^{d-2}p_{i}\delta_i=:p$. Applying the projection $V$ to $q$ leads to
		\begin{align}
		V(q)&=\mu\left(\sum_{i=0}^{d-2}p_i\delta_i\right)+(1-\mu)\left(\sum_{i=0}^{d-2}p'_i\delta_i\right)\\
		&=\sum_{i=0}^{d-2}\delta_i\left(\mu p_i+(1-\mu)p'_i\right).
		\end{align}
		By definition, $\tilde{v}=V(v)=\sum_{i=0}^{d-2}v_i\delta_i$. Thus, it holds
		\begin{align}
		\tilde{v}^TV(q)&=\sum_{i=0}^{d-2}\left(\mu v_ip_i+(1-\mu)v_ip'_i\right)\\
		&=\mu\left(\sum_{i=0}^{d-2}v_ip_i\right)+(1-\mu)\left(\sum_{i=0}^{d-2}v_ip'_i\right)\\
		&=\mu c+(1-\mu)\left(\sum_{i=0}^{d-2}v_ip'_i\right).\label{eqn:last_step_before_absch}
		\end{align}
		Without loss of generality, let $v^Tq<c$ such that $c>v^Tq=\mu c+(1-\mu)\sum_{i=0}^{d-2}v_ip_i'$ implies $(1-\mu)\sum_{i=0}^{d-2}v_ip'_i<(1-\mu)c$. Then by \eqref{eqn:last_step_before_absch} $\tilde{v}^TV(q)<\mu c+(1-\mu)c=c$ and for $v^Tq>c$ it holds $\tilde{v}^TV(q)>c$.
	\end{proof}
	\begin{proof}[Proof of Lemma~\ref{lem:bring_it_together}]
	Recall that $\mathcal{A}\cap\mathcal{B}=\emptyset$. First, set $\mathcal{A}_1:=\mathcal{A}$ and $\mathcal{B}_1:=\mathcal{B}$. Subsequently apply the following procedure: For the $i$-th step with $i\in[1,d-2]$, choose a shrinking operation $N_{i,\lambda_i(d-i+1)}$ as specified in Lemma~\ref{lem:shrinking and rotation}. By Lemma~\ref{lem:shrinking and rotation} we know that there exists a $\lambda_i(d-i+1)\in[0,1]$ such that for every $x\in\Delta^{(d+1-i)}$, $P_i(x)\in\Delta^{(d+1-i)}$ with $P_i=O\circ N_{i,\lambda_i(d)}$ for any rotation $O$ for which it holds $O(\mathbf{1})=\mathbf{1}$. Define the shrinken sets $\tilde{\mathcal{A}_i}$ and $\tilde{\mathcal{B}_i}$ as follows: $\tilde{\mathcal{A}_i}:=\left\{N_{i,\lambda_i(d)}(a_i):a_i\in\mathcal{A}_i\right\}$, $\tilde{\mathcal{B}_i}:=\left\{N_{i,\lambda_i(d)}(b_i):b_i\in\mathcal{B}_i\right\}$. Since shrinking is an invertible operation, it still holds $\tilde{\mathcal{A}_i}\cap\tilde{\mathcal{B}_i}=\emptyset$. Thus by the Separating Hyperplane Theorem \cite[Section 2.5.1]{boyd2004convex} there exists a hyperplane $\mathcal{H}_i$ with normal $v_i\neq 0$ and $c_i$ dividing $\tilde{\mathcal{A}_i}$ and $\tilde{\mathcal{B}_i}$. Without loss of generality, assume that it holds $v_i^T\tilde{a_i}<c_i$ for all $\tilde{a_i}\in\tilde{\mathcal{A}_i}$ and $v_i^T\tilde{b_i}>c_i$ for all $\tilde{b_i}\in\tilde{\mathcal{B}_i}$. Now choose $O$, according to Lemma~~\ref{lem:shrinking and rotation}, to be a particular rotation $O_i$ such that for $\tilde{\mathcal H_i}:=\left\{O_i(x):x\in\mathcal{H}_i\right\}=\left\{x\in\mathbb{R}^{d+1-i}:\tilde{v}_i^Tx=c_i\right\}\cap\Delta^{(d+1-i)}$,
	for the $d-i$-th entry of $\tilde{v}_i$ it holds $\tilde{v}_{i,d-i}=c_i$. From this it follows $\delta_{d-i}\in\tilde{\mathcal H_i}\land \sum_{j=0}^{d-1-i}p_{i,j}\delta_j\in\tilde{\mathcal H_i}$ for fixed $p_{i,j}$ with $\sum_{j=0}^{d-2}p_{i,j}=1$. Next define a projection $V_i(\delta_{d-i}):=\sum_{j=1}^{d-1-i}p_{i,j}\delta_j=:p_i$ with $p_{i,1},\ldots, p_{i,d-1-i}>0$ and $\tilde{v}_i^Tp_i=c_i$ and set $Q_i=P_i\circ V_i$. Since $Q_i$ is a concatenation of affine transformations, convexity of both sets is preserved under $Q_i$. Then Lemma~\ref{lem:projection} guarantees the existence of $\hat{\mathcal{H}_i}:=\left\{x\in\mathbb{R}^{d-i}:\hat{v}_i^Tx=c_i\right\}\cap\Delta^{(d-i)}$ separating the sets $\hat{\mathcal{A}_i}:=\left\{Q_i(a_i):a_i\in\mathcal{A}_i\right\}$ and $\hat{\mathcal{B}_i}:=\left\{Q_i(b_i):b_i\in\mathcal{B}_i\right\}$ after transformation. Now set $A_{i+1}:=\hat{\mathcal{A}_i}$ and $B_{i+1}:=\hat{\mathcal{B}_i}$. Iteratively apply the previous steps increasing $i$ by one until $i=d-1$ such that it then holds $\hat{\mathcal{A}}_{d-2},\hat{\mathcal{B}}_{d-2}\subset\Delta^{(2)}$.
	\end{proof}
\bibliographystyle{IEEEtran}
\bibliography{bibmax}
\end{document}